\newcommand{\eps}{\varepsilon}
\newcommand{\MG}{\textbf{MG}}
\newcommand{\SSH}{\textbf{SSH}}
\newcommand{\RBMC}{\textbf{RBMC}}
\newcommand{\MED}{\textbf{MED}}
\newcommand{\SMED}{\textbf{SMED}}
\newcommand{\SMIN}{\textbf{SMIN}}
\newcommand{\RTUC}{\textbf{RTUC}}
\renewcommand{\SS}{\textbf{SS}}
\newcommand{\MHE}{\textbf{MHE}}
\newtheorem{theorem}{Theorem}
\newtheorem{lemma}{Lemma}
\begin{document}


\title{A High-Performance Algorithm for Identifying Frequent Items in Data Streams}

\pagestyle{plain}
\date{}
\author{
Daniel Anderson\thanks{Georgetown University.}
\and
Pryce Bevin\thanks{Georgetown University.}
\and
Kevin Lang\thanks{Yahoo.}
\and
Edo Liberty\thanks{Amazon. Research performed while at Yahoo Research.}
\and
Lee Rhodes\thanks{Yahoo.}
\and
Justin Thaler\thanks{Georgetown University. Parts of this work were performed while the author was at Yahoo Research.}
}

\maketitle

\begin{abstract}
Estimating frequencies of items over data streams is a common building block in streaming data measurement and analysis. 
Misra and Gries introduced their seminal algorithm for the problem in 1982, and the problem has since been revisited many times due its practicality and applicability. 
We describe a highly optimized version of Misra and Gries' algorithm that is suitable for deployment in industrial settings.
Our code is made public via an open source library called DataSketches that is already used by several companies and production systems. 

Our algorithm improves on two theoretical and practical aspects of prior work.
First, it handles {\it weighted} updates in amortized constant time, a common requirement in practice.
Second, it uses a simple and fast method for merging summaries that asymptotically improves on prior work even for unweighted streams.
We describe experiments confirming that our algorithms are more efficient than prior proposals. 

\end{abstract}

\section{Introduction}
Identifying frequent items (also known as heavy hitters) and answering point queries 
(i.e., queries of the form ``approximately how many times did item $i$ appear in the stream?) are two of the most basic computational tasks 
performed on data streams. 
Due to their practical importance, streaming algorithms for these tasks
have been studied intensely \cite{mg, spacesaving, lossycounting, countmin, countsketch, beatingcount, berinde, ch, woodruffpods, merge1, merge2}.  
It may therefore seem as though streaming frequency approximation is well-understood,
with little room for further insight or improvement. However, when we set about implementing an algorithm
suitable for industrial use on web-scale data, we found that existing algorithms
had significant shortcomings. 
In this work, we describe these shortcomings and present an optimized algorithm that addresses them. 

\subsection{Our Contributions} 
Our algorithm is similar to the popular Misra-Gries (\MG) \cite{mg} and Space Saving (\SS) \cite{spacesaving} algorithms.
However, it differs from them in several simple yet important ways. 
Specifically, our algorithm handles weighted updates in a highly efficient manner. 
It runs in amortized constant time per stream update, and employs optimized data structures to minimize its memory footprint while maximizing throughput. 
We also give a simple and efficient method for merging summaries produced by our algorithm, or more generally, any \emph{counter-based} algorithm (cf. Section \ref{sec:counter} for a definition of this class of algorithms). Our merging procedure significantly improves on prior work
even for unweighted streams.
We present experimental results validating the superiority 
of our algorithms over existing solutions. 

Our experiments 
also clarify conventional wisdom regarding the best way to implement 
frequent items algorithms in practice. 
Prior literature strongly suggested that the best algorithm in practice
is a min-heap based implementation
of \SS\ (we denote this implementation of \SS\ by \MHE, cf. Section \ref{sec:mhe}). 
Indeed, experimental work of Cormode and Hadjieleftheriou \cite{ch} 
that focused exclusively on the setting of unweighted streams identified \SS\ 
as state of the art.
They showed that a linked-list based implementation of \SS\ (denoted in \cite{ch} as \textbf{SSL}) is noticeably
faster than the min-heap based implementation, but also significantly more space intensive.
Moreover, \textbf{SSL} does not naturally extend to weighted updates, while \MHE\ does.
Based on these findings, subsequent work (e.g., \cite{hhh}) in the weighted setting used \MHE\ as the implementation
of choice.
We overturn this conventional wisdom: our implementations do not
use a min-heap, yet are
significantly faster and more space efficient than
\MHE.

\subsection{Problem Statement and Motivation}
Given a data stream $\sigma$ of length $n$ over a universe $[m]=\{1, \dots, m\}$, where each stream update
is of the form $(i_j, \Delta_j)$ for some $i_j \in [m]$ and $\Delta_j > 0$ is a weight, let $f_i = \sum_{j \colon i_j = i} \Delta_j$ denote the (weighted)
frequency of $i$ in the stream and let $N = \sum_j \Delta_j$ denote the weighted stream length. 
The goal of an algorithm for answering point queries is the following. 
In a single pass over the stream, the algorithm must compute a small summary (sometimes also called a \emph{sketch})\footnote{Some works use the term sketch to refer only to a specific class of summaries known as \emph{linear sketches}.} of the stream
such that, for any item $i \in [m]$, it is possible to efficiently derive an estimate $\hat{f}_i$
such that $|\hat{f}_i-f_i|$ is small. 

Ideally, the algorithm should also be able to identify $(\phi, \eps)$-heavy hitters (in the $\ell_1$-norm). That is, given a user-specified threshold $\phi$,
the algorithm should be able to identify all items $i$ with $f_i \geq \phi \cdot N$, possibly also returning a small number
of ``false positive'' items whose frequency is slightly below the specified threshold $\phi N$.

We note that algorithms for answering point queries and identifying heavy hitters are often used in a black-box fashion as subroutines in
streaming algorithms solving more complicated problems. These include estimating the \emph{empirical entropy} of a data stream \cite{entropy},
and identifying \emph{hierarchical heavy hitters} \cite{hhh}, both of which have important applications
in network data analysis and anomaly detection \cite{entropyapp1, entropyapp2, entropyapp3, hhhapp1, hhhapp2}. 
Our optimized algorithms for answering point queries and identifying heavy hitters
can be directly substituted into these more complicated algorithms. While we leave this task
to future work, we expect that our optimizations would yield
similar speed and memory improvements in these settings.

\paragraph*{The Importance of Weighted Updates}
\label{sec:apps}
\label{apps}
The majority of the literature on streaming approximate frequency analysis 
considers the case of unweighted updates, where $\Delta_j=1$ for all $j$. 
However, there are many applications in which the ability to efficiently handle weighted updates is critical.
For example:
\begin{itemize}
\item It is often useful to track  the total amount of data users send over a network, or the total time that users spend using a mobile app or watching online video.
\item In spam detection, it is useful to track the total number of recipients of emails from a given address, and many emails
have an enormous number of recipients.
\item Weighted updates arise naturally whenever items from the data universe are associated with importance factors. For example, 
in information-retrieval and text mining, term frequency--inverse document frequency (tf-idf for short) is a standard statistic
that assigns a value to each word in a document that is part of a larger corpus. The value assigned to a word increases with the number of times a word appears in the document, but decreases with the frequency of the word in the corpus, in order to account for the fact that some words appear more frequently in general.
\item We will also see (cf. Section \ref{sec:merging}) that the ability to efficiently handle weighted updates enables
extremely efficient \emph{merging} of summaries -- significantly faster than merging procedures proposed in prior work.
\end{itemize}
Unlike algorithms tailored to unweighted updates, algorithms capable of handling weighted updates typically apply to real-valued weights. This will be the case for the algorithms we give in this work.

There are a number of simple ways to modify algorithms for the unweighted setting to handle the weighted setting. However, 
as we describe in Section \ref{sec:priorworkweighted}, all of the modifications that (to our knowledge) have been proposed in prior work
have significant shortcomings.

\subsection{Overview of Prior Work}
\label{sec:counter}
In a comprehensive 
paper and survey, Cormode and Hadjieleftheriou \cite{ch} provide a detailed comparison of 
proposed algorithms for finding heavy hitters and answering point queries in the case of unit weight updates. 
They classified algorithms into three classes: counter-based algorithms,
quantile algorithms, and (linear) sketches. They found that counter-based algorithms perform significantly better in terms of space, speed, and accuracy than quantile and sketching
algorithms, a finding that we confirmed in our own initial experiments.
Hence, we focus on counter-based algorithms for the remainder
of this work. 

\paragraph*{Note.} The main advantage of sketches over counter-based algorithms is the ability to handle streams with deletions (i.e., streams in the strict turnstile model, in which the $\Delta_j$'s may be negative, yet for each $i$ the final frequency
$f_i$ of $i$ is non-negative). 
Specifically, in the strict turnstile model, sketch algorithms
can return frequency estimates with error proportional to $N=\sum_{j} \Delta_j$.
While deletions are not encountered in the applications we target in this work (cf. Section \ref{sec:apps}),
we note here that any counter-based algorithm can easily be used to handle deletions at the cost of 
having error proportional to $\sum_{j} |\Delta_j|$ rather than to $N=\sum_{j} \Delta_j$.
To achieve this, simply apply one instance of a counter-based algorithm to all the positive-weight updates,
and a second instance to (the absolute value of) of all negative weight updates, and define the estimate for $f_i$ to be the 
difference of the estimates for $i$ returned by the two instances. By the triangle inequality, the error of any estimated frequency
is at most the sum of the errors of the two sketches. This approach will be suitable
in applications in which $\sum_{j} |\Delta_j|$ is not much larger than $N=\sum_j \Delta_j$.

\subsubsection{Counter-Based Algorithms} 
A counter-based algorithm stores $k$ counters, where $k$ is a parameter of the algorithm that controls both the space usage
and the error (larger $k$ corresponds to more space and less error). Each counter stores an approximate count for
some item of the data universe; the various counter-based algorithms differ in how they assign counters to items and how they determine
the approximate counts. 

Cormode and Hadjieleftheriou \cite{ch} found that two counter-based algorithms are state of the art: Misra-Gries (MG) \cite{mg} and Space Saving (SS) \cite{spacesaving}. 
They found that the estimates returned by \SS\ tend to be more accurate than those returned by \MG\ -- we discuss the
 reasons for the improved accuracy of \SS\ relative to \MG\ in Section \ref{sec:estimate}.
However, subsequent work by Agarwal et al. \cite{mergeable} observed that these two algorithms are actually isomorphic,
in the following sense. The estimates returned by the \SS\ algorithm with $k+1$ counters can be derived from
the summary computed by the \MG\ algorithm with $k$ counters. That is, the summaries computed 
by both \SS\ and \MG\ contain essentially the same information about the stream, but the algorithms
differ in how they use this information to determine a ``best estimate'' for any point query.

Our algorithm can be naturally viewed as an extension and modification of \MG. Hence, we now describe \MG\ in detail.

\subsubsection{Description of the Misra-Gries Algorithm For Unit Weight Updates} 
Every time the algorithm processes a stream update $(i, +1)$, the \MG\ algorithm looks to see if $i$ is assigned a counter,
and if so it increments the counter. If not, and an unassigned counter exists, the algorithm assigns the counter to $i$ and sets
the approximate count to $1$. If no unassigned counter exists, the algorithm decrements all counters by 1, and marks all counters
set to 0 as unassigned. When asked to provide an estimate $\hat{f}_i$ for the frequency of item $f_i$,
the algorithm returns 0 if $i$ is not assigned a counter, and returns the value of the counter assigned to $i$ otherwise.

See Algorithm \ref{code:mg} for pseudocode. The following standard lemma bounds the error of any estimate returned by the \MG\ algorithm \cite{mg}, and we provide the proof for completeness.

{\small
\begin{algorithm}[t]
{\small
\caption{\small{Misra-Gries Algorithm for Unit Weight Updates}}\label{code:mg}
\begin{algorithmic}[1]{
\STATE \textbf{Algorithm}: Misra-Gries(k):}
\STATE \hspace{2mm} $T \gets \emptyset$ // $T$ is set of items assigned a counter
\STATE \textbf{Function } Update($i, +1$):
\STATE \hspace{2mm} \textbf{if } $i \in T$:
\STATE \hspace{4mm} $c(i) \gets c(i) + 1$
\STATE \hspace{2mm} \textbf{else if } $|T| < k$:
\STATE \hspace{4mm} $T = T \cup \{i\}$
\STATE \hspace{4mm} $c(i) \gets 1$
\STATE \hspace{2mm} \textbf{else}:
\STATE \hspace{4mm} DecrementCounters()
\STATE \textbf{Function } DecrementCounters():
\STATE \hspace{2mm} \textbf{for all} $j \in T$:
\STATE \hspace{4mm} $c(j) = c(j) - 1$
\STATE \hspace{4mm} \textbf{if } $c(j) = 0$:
\STATE \hspace{6mm} $T = T \setminus \{j\}$
\STATE \textbf{Function } Estimate($i$):
\STATE \hspace{2mm} \textbf{if } $i \in T$ 
\STATE \hspace{4mm} \textbf{return } $c(i)$
\STATE \hspace{2mm} \textbf{else }
\STATE \hspace{4mm} \textbf{return } 0

\end{algorithmic}
}
\end{algorithm}
}

\begin{lemma} The \MG\ algorithm with $k$ counters is guaranteed to return, for each $i \in [n]$, and estimate $\hat{f}_i$ satisfying
$0 \leq f_i - \hat{f}_i \leq N/(k+1)$. \label{lem:die} \end{lemma}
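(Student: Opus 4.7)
The plan is to prove the two inequalities separately, with the lower bound $\hat{f}_i \leq f_i$ being essentially immediate and the upper bound $f_i - \hat{f}_i \leq N/(k+1)$ following from a simple potential/counting argument on the number of \textsc{DecrementCounters} calls.

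For the lower bound, I would observe directly from Algorithm \ref{code:mg} that the only operation that increases $c(i)$ is the execution of line 5 (or the assignment in line 8), both of which occur only during an update of the form $(i,+1)$. Since $c(i)$ is never otherwise increased, and is decremented only when positive (being removed from $T$ on reaching $0$), a straightforward induction on the stream length shows that at every time step either $i \notin T$ (so $\hat{f}_i = 0 \le f_i$) or $c(i) \le f_i$ at that moment. Hence $\hat{f}_i \le f_i$ always.

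The main content is the other direction. I would introduce $d$ as the total number of times \textsc{DecrementCounters} is invoked over the course of the stream, and establish two claims. First, a charging argument: each invocation of \textsc{DecrementCounters} is triggered by one stream update (contributing weight $1$ to $N$ but nothing to any counter), and the call itself reduces the sum $\sum_{j \in T} c(j)$ by exactly $k$, since all $k$ counters are full at that moment and each is decreased by $1$. Thus each call ``absorbs'' at least $k+1$ units of weight out of $N$ (the $1$ unit of the triggering update plus the $k$ units removed from the counters, all of which came from earlier stream updates and are never restored). Since at the end $\sum_{j \in T} c(j) \ge 0$, we get $d(k+1) \le N$, hence $d \le N/(k+1)$.

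Second, I would argue that for every item $i$, the gap $f_i - \hat{f}_i$ grows by at most $1$ per call to \textsc{DecrementCounters} and does not grow otherwise. Indeed, on an update $(i,+1)$ that hits the ``if'' or ``else if'' branches, both $f_i$ and $c(i)$ increase by $1$, leaving the gap unchanged; on an update that hits the ``else'' branch where $i \notin T$ and $|T|=k$, $f_i$ grows by $1$ while $c(i)$ stays at $0$, and \textsc{DecrementCounters} additionally removes $1$ from $c(j)$ for each $j \ne i$ in $T$; and on an update $(i',+1)$ with $i' \ne i$ reaching the ``else'' branch, $f_i$ is unchanged but $c(i)$ drops by at most $1$. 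In all cases the per-call growth of $f_i - \hat{f}_i$ is at most $1$, so $f_i - \hat{f}_i \le d \le N/(k+1)$. The only subtlety worth being careful about is the case where $i$ leaves $T$ during a decrement (switching $\hat{f}_i$ from its prior value to $0$); here the prior value was exactly $1$, so the gap still grows by $1$, and the bookkeeping goes through.
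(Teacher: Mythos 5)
Your proof is correct and follows essentially the same route as the paper's: both bound $f_i - \hat{f}_i$ by the number $d$ of decrement operations, and both bound $d \leq N/(k+1)$ by observing that each decrement call removes $k$ units from the counter sum while its triggering update adds $1$ to $N$ but nothing to the counters, so the nonnegativity of the counter sum forces $d(k+1) \leq N$. Your case analysis of how the gap evolves is just a more explicit spelling-out of the paper's one-line claim that the error is at most the number of decrements.
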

\begin{proof}
It is obvious from the description of the algorithm that $0 \leq f_i - \hat{f}_i$ for all $i$.
To prove that $f_i - \hat{f}_i \leq N/k$ for all $i$, observe that $f_i - \hat{f}_i$ is at most the total
number of decrement operations over the course of the algorithm, and there can be at most $N/k$ decrement operations.
Indeed, if there are $d$ decrement operations,
then since each decrement operation affects all $k$ counters, the sum of the counter values at the end of
the algorithm is exactly $N-d \cdot (k+1)$. Since no counter value is ever negative, it holds that $N-d\cdot (k+1) \geq 0$, and hence $d \leq N/(k+1)$.
\end{proof}

Using a similar proof, Berinde et al. \cite{berinde} show that the \MG\ algorithm also satisfies the following guarantee.
This guarantee, referred to by Berinde et al. as a \emph{tail guarantee}, is much stronger than Lemma \ref{lem:die} for streams with a very skewed frequency distribution (i.e., with a relatively small number of frequent
 items
constituting the bulk of the stream).

\begin{lemma}[Berinde et al. \cite{berinde}] \label{lem:berinde}
Let $N^{\text{res}(j)}$ denote
the sum of the frequencies of all but the top $j$ most frequent items. 
For any $j < k$, the \MG\ algorithm with $k$ counters is guaranteed to return, for each $i \in [n]$, and estimate $\hat{f}_i$ satisfying
$0 \leq f_i - \hat{f}_i \leq N^{\text{res}(j)}/(k+1-j)$.
\end{lemma}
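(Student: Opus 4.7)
The plan is to adapt the proof of Lemma \ref{lem:die}, replacing its global accounting with a refined argument that isolates the contribution of non-top-$j$ items. First, exactly as in Lemma \ref{lem:die}, I would observe that for any item $i$, $f_i - \hat{f}_i \leq d$, where $d$ is the total number of decrement operations executed over the course of the algorithm. The reason is that each decrement contributes at most one to $f_i - \hat{f}_i$: either the decrement was triggered by an update for $i$ while $i \notin T$, or $i \in T$ at that moment and its counter was reduced by one (but not both, since the triggering item is by definition absent from $T$). Hence it suffices to show $d \leq N^{\text{res}(j)}/(k+1-j)$.

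The key refinement is the observation that each decrement ``consumes'' exactly $k+1$ units of stream frequency, distributed across $k+1$ \emph{distinct} items: the $k$ items currently assigned a counter, together with the incoming item whose update triggered the decrement (which is distinct from the $k$ counter items precisely because a decrement is only invoked when the arriving item is absent from $T$). Since there are only $j$ distinct top-$j$ items in the entire stream, at most $j$ of these $k+1$ items can be top-$j$, so each decrement consumes at least $k+1-j$ units of frequency belonging to non-top-$j$ items.

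Summing over all $d$ decrements, the total frequency of non-top-$j$ items consumed is at least $d(k+1-j)$. On the other hand, the consumed frequency for any particular item $i$ is exactly $f_i - \hat{f}_i$, and so the total consumed frequency of non-top-$j$ items is $\sum_{i \notin \text{top-}j}(f_i - \hat{f}_i) \leq \sum_{i \notin \text{top-}j} f_i = N^{\text{res}(j)}$. Combining these two bounds yields $d(k+1-j) \leq N^{\text{res}(j)}$, and substituting back into $f_i - \hat{f}_i \leq d$ completes the argument.

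The main step to get right is the per-decrement bookkeeping: verifying that the $k+1$ affected items are genuinely distinct (so the ``at most $j$ top-$j$ among them'' inequality is valid), and verifying that summing the decrement-by-decrement non-top-$j$ losses equals $\sum_{i \notin \text{top-}j}(f_i - \hat{f}_i)$ rather than something larger. Both points follow directly from the control flow of Algorithm \ref{code:mg}, and once they are in hand the rest is an arithmetic rearrangement that mirrors Lemma \ref{lem:die} with $N$ replaced by $N^{\text{res}(j)}$ and $k+1$ replaced by $k+1-j$.
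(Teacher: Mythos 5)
Your argument is correct. Note that the paper does not actually prove Lemma \ref{lem:berinde} itself---it cites Berinde et al.\ and only remarks that the proof is ``similar'' to that of Lemma \ref{lem:die}---so the natural comparison is with the paper's proof of the analogous weighted tail guarantee, Theorem \ref{thm:errorweighted}. There the authors use a global potential argument: they first show by induction over the stream (Lemma \ref{key}) that the maximum error $E$ satisfies $E \leq (N-C)/k^*$, where $C$ is the sum of all counter values at the end, and then combine this with the lower bound $C \geq \sum_{i=1}^{j}(f_i - E)$ to extract the tail bound. You instead use a per-decrement charging scheme: each decrement touches $k+1$ genuinely distinct items (the $k$ counter holders plus the triggering item, which is absent from $T$ by the control flow), at most $j$ of which can be top-$j$ items, so each decrement charges at least $k+1-j$ units to non-top-$j$ items; summing over the $d$ decrements and using that the per-item charge telescopes to exactly $f_i - \hat{f}_i \leq f_i$ gives $d(k+1-j) \leq N^{\text{res}(j)}$, hence $f_i - \hat{f}_i \leq d \leq N^{\text{res}(j)}/(k+1-j)$. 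Your bookkeeping is sound on both points you flag. The two routes buy different things: your charging argument is more elementary, avoids induction, and is the most direct generalization of Lemma \ref{lem:die}; the paper's formulation $E \leq (N-C)/k^*$ is packaged so that it composes under merging (it is reused as a black box in the merge error theorem) and survives the passage to weighted, non-uniform decrements, where some counters are reduced by less than $c_{k^*}$ before being discarded and ``count the decrement operations'' is no longer the right currency---your argument would have to be restated in terms of accumulated decrement mass to cover that setting.
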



\noindent \textbf{Efficiently Implementing Misra-Gries for Unit Weight Updates.}
A natural way to implement the \MG\ algorithm is to maintain a hash table that stores all assigned counters. The key for a counter is the item it is assigned to, and the value is the count. Whenever a stream update $(i, +1)$ arrives, the algorithm looks up key $i$ in the hash table, and if it is present, it increments the corresponding value (i.e., count). If not, the algorithm checks to see if the hash table is storing $k$ key-value pairs. If not, the key-value pair $(i, 1)$ is inserted into the hash table. If so, the algorithm iterates over all key-value pairs in the hash table, decrementing each value and deleting any pair if the value becomes zero. Assuming all hash table operations (i.e., lookup, insert, and delete) take $O(1)$ amortized time, and that enumerating all key-value pairs in the hash table can be done in time $O(k)$, then this implementation runs in amortized $O(1)$ time per stream update. This amortized time bound exploits the fact that decrement operations, which require enumerating all $k$ key-value pairs, only occur at most $N/k$ times. 

\subsubsection{Description of the Space Saving Algorithm For Unit Weight Updates}
\label{sec:ssunit}
When processing a stream update $(i, +1)$, the \SS\ algorithm behaves identically to \MG, except in the case
where $i$ is not assigned a counter, and no unassigned counter exists. In this case, \SS\ increments the counter with
the smallest value and assigns it to $i$. 
When asked to provide an estimate $\hat{f}_i$ for item the frequency of item $f_i$,
the algorithm returns the value of the smallest counter if $i$ is not assigned a counter, and returns the value of the counter assigned to $i$ otherwise.
See Algorithm \ref{code:ss} for pseudocode.

{\small
\begin{algorithm}[t]
{\small
\caption{\small{Space Saving Algorithm for Unit Weight Updates}}\label{code:ss}
\begin{algorithmic}[1]{
\STATE \textbf{Algorithm}: SpaceSaving(k):}
\STATE \hspace{2mm} $T \gets \emptyset$ // $T$ is set of items assigned a counter
\STATE \textbf{Function } Update($i, +1$):
\STATE \hspace{2mm} \textbf{if } $i \in T$:
\STATE \hspace{4mm} $c(i) \gets c(i) + 1$
\STATE \hspace{2mm} \textbf{else if } $|T| < k$:
\STATE \hspace{4mm} $T = T \cup \{i\}$
\STATE \hspace{4mm} $c(i) \gets c(i) + 1$
\STATE \hspace{2mm} \textbf{else }:
\STATE \hspace{4mm} $j \gets \text{arg min}_{j \in T} c(j)$
\STATE \hspace{4mm} $c(i) \gets c(j) + 1$
\STATE \hspace{4mm} $T = \left(T \setminus \{j\}\right) \cup \{i\}$
\STATE \textbf{Function } Estimate($i$):
\STATE \hspace{2mm} \textbf{if } $i \in T$:
\STATE \hspace{4mm} \textbf{return } $c(i)$
\STATE \hspace{2mm} \textbf{else }
\STATE \hspace{4mm} \textbf{return } \text{min}$_{j \in T}\{c(j)\}$

\end{algorithmic}
}
\end{algorithm}
}


\paragraph*{Efficiently Implementing Space Saving for Unit Weight Updates}
The \SS\ algorithm can be naturally implemented by using a min-heap data structure to track the smallest counter at all times. 
Following \cite{ch}, we refer to this implementation as \SSH. 
Unfortunately, the use of a min-heap is slow (taking time $O(\log k)$ per stream update), and it also nearly doubles the space usage compared to the \MG\ 
implementation described above,
since both a hash table of capacity $k$ and a min-heap must be stored.

In the case of unit updates, Metwally et al. \cite{spacesaving} propose a doubly linked list based data structure they call \emph{Stream Summary} that can process 
updates in $O(1)$ time. However, the need to store pointers in the linked list will more than double the space usage of the Misra-Gries implementation described above. Moreover, as mentioned again in Section \ref{sec:diediedie} below, this implementation method does
not naturally extend to weighted updates.

\subsubsection{Extending Misra-Gries to the Weighted Case: Prior Work and Its Limitations}
\label{sec:priorworkweighted}
There are a number of simple ways to extend the \MG\ algorithm to handle weighted updates.
Perhaps the simplest is to treat an update $(i_j, \Delta_j)$ as $\Delta_j$ unit updates. 
We refer to this as the \emph{Reduce-To-Unit-Case} (\RTUC-\MG) extension of the \MG\ algorithm.
However, this takes time at least $\Delta_j$ per stream update, which
is unacceptable when the weights may be large (this approach is also awkward to apply to the case of non-integer weights). 

To our knowledge, the only other proposal to extend the \MG\ algorithm to weighted updates is due to Berinde et al. \cite{berinde}, who suggest the following modification.
Every time the algorithm processes a stream update $(i_j, \Delta_j)$, the algorithm looks to see if $i_j$ is assigned a counter,
and if so it increments the counter by $\Delta_j$. If not, and an unassigned counter exists, the algorithm assigns the counter to $i$ and sets
the approximate count to $\Delta_j$. If no unassigned counter exists, the algorithm's behavior depends on
whether $\Delta_j \leq c_{\text{min}}$, 
where $c_{\text{min}}$ denotes the minimum value of any counter. 
If $\Delta_j \leq c_{\text{min}}$, then all stored counters are reduced by $\Delta_j$. Otherwise,
all counters are reduced by $c_{\text{min}}$, and some counter with zero
count (there must be at least one now) is assigned to $i_j$ and given
count $\Delta_j-c_{\text{min}}$.
We refer to this as the \emph{Reduce-By-Min-Counter} (\RBMC) extension of the \MG\ algorithm to weighted updates. 

It is easy to see that the \RBMC\ algorithm produces estimates identical to the \RTUC-\MG\ algorithm, and hence
satisfies the accuracy guarantees of Lemmas \ref{lem:die} and \ref{lem:berinde}. The 
advantage of the \RBMC-\MG\ algorithm is in its runtime, which does not grow linearly with $\Delta_j$.
The main shortcoming of the \RBMC\ algorithm is that it still may not run in amortized $O(1)$ time per stream update. In
fact, there are streams on which the proposal will perform expensive decrement operations on essentially every stream update: consider a stream where the first $k$ updates increment 
the counts of distinct items by an arbitrarily large number $M$, and then the following $M$ stream updates are unit updates to different items. Decrement operations, each requiring time $\Theta(k)$, will be performed on every one of the last $M$ updates in the stream. While such examples may be contrived, Section \ref{sec:expts} provides experimental results on real datasets showing that the runtime of this algorithm is significantly higher in practice than the alternatives presented in this paper.

\subsubsection{Extending Space Saving to the Weighted Case: Prior Work and Its Limitations}
\label{sec:diediedie} \label{sec:mhe}
The \SS\ algorithm also has a natural \emph{Reduce-To-Unit-Case} ({\RTUC-\SS}) extension to weighted updates.
However, like the \RTUC-\MG\ alogrithm, this takes time at least $\Delta_j$ per stream update, which
is unacceptable when the weights may be large. 

The min-heap based implementation, \SSH, of \SS\ naturally extends to the weighted case. We refer to this as the \emph{Min-Heap-Extension} (\MHE) extension of the \SS\ algorithm to weighted updates. However, \MHE\ suffers from
the same poor (i.e., $O(\log k)$) update time and larger space usage (relative to \RBMC) as in the unit weight update case.
Despite these shortcomings, this implementation of \SS\ for weighted updates was used in at least one prior work on computing Hierarchical Heavy Hitters \cite{hhh}.

The implementation of Space Saving that uses the \emph{Stream Summary} data structure does not naturally extend to the case of weighted updates \cite{ch}.


\medskip
We discuss additional prior work in Section \ref{sec:additional}.

\subsection{Isomorphism Results for Weighted Streams} 
\label{sec:isomorphic}
We mention that the \MHE\ and \RBMC\ algorithms are isomorphic, in the sense that 
the estimates returned by the \MHE\ algorithm with $k+1$ counters can be derived from
the summary computed by the \RBMC\ algorithm with $k$ counters. This follows immediately from the facts
that both the \MHE\ and \RBMC\ produce the same estimates as \RTUC-\SS\ and \RTUC-\MG\ respectively,
and the latter two algorithms were shown to be isomorphic by Agarwal et al. \cite{mergeable}. 

\section{Our Algorithm}
For expository purposes, we first propose an initial modification of the \MG\ algorithm that achieves (up to a constant factor)
the same accuracy guarantees as the \RBMC\ algorithm, while guaranteeing $O(1)$ amortized runtime per stream update.
We then describe our final algorithm, which improves over the runtime and space usage of the initial modification by a factor close to 2.

\subsection{An Initial Proposal}
For simplicity, we assume throughout this section that $k$ is even.
Recall (cf. Section \ref{sec:priorworkweighted}) that the main downside of the \RBMC\ algorithm \cite{berinde} is that it does not guarantee $O(1)$ amortized runtime per stream update.
The reason was that \RBMC\ may have to perform decrement operations essentially every stream update,
and each decrement operation requires iterating over $k$ counters.

To guarantee $O(1)$ amortized time per stream update, it is enough to ensure that decrement operations are performed
at most once every (say) $k/2$ stream updates. We mention that a similar technique was used by Liberty to efficiently identify ``frequent directions'' in a stream of vectors \cite{edo}.

 To ensure that this property holds, it suffices to consider the following 
simple modification of \RBMC. 
Every time the algorithm processes a stream update $(i_j, \Delta_j)$, the algorithm looks to see if $i_j$ is assigned a counter,
and if so it increments the counter by $\Delta_j$. If not, and an unassigned counter exists, the algorithm assigns the counter to $i_j$ and sets
the approximate count to $\Delta_j$. If no unassigned counter exists, the algorithm decrements all counters by 
the \emph{median} counter value $c_{\text{median}}$. The algorithm then marks all counters
set to a non-positive value as unassigned. If $\Delta_j$ is larger than $c_{\text{median}}$,
then $i_j$ is then assigned a counter, which is set to $\Delta_j - c_{\text{median}}$.
 
We refer to this algorithm as the \emph{Reduce-By-Median-Counter} (\MED) extension of the \MG\ algorithm to weighted updates.
See Algorithm \ref{code:firstweighted} for pseudocode, which is expressed in slightly more general form.
Specifically, in Algorithm \ref{code:firstweighted},
$c_{\text{median}}$ is replaced by the the $k^*$'th largest counter value,
where $k^*$ is a parameter of the algorithm. One recovers the description above by setting $k^*=k/2$.

{\small
\begin{algorithm}[t]
{\small
\caption{\small{The Reduce-By-Median-Counter (\MED) Extension of the \MG\ Algorithm to Weighted Updates}}\label{code:firstweighted}
\begin{algorithmic}[1]{
\STATE // Notation: $k^*$ is a parameter of the algorithm
\STATE \textbf{Algorithm}: Initial-Algorithm(k):}
\STATE \hspace{2mm} 
$T \gets \emptyset$ // $T$ is set of items assigned a counter
\STATE \textbf{Function } Update($i, \Delta$):
\STATE \hspace{2mm} \textbf{if } $i \in T$:
\STATE \hspace{4mm} $c(i) \gets c(i) + \Delta$
\STATE \hspace{2mm} \textbf{else if } $|T| < k$:
\STATE \hspace{4mm} $T = T \cup \{i\}$
\STATE \hspace{4mm} $c(i) \gets c(i) + \Delta$
\STATE \hspace{2mm} \textbf{else }:
\STATE \hspace{4mm} DecrementCounters()
\STATE \hspace{4mm} \label{linedie} \textbf{if } $\Delta \geq c_{k^*}$ // See Line \ref{notation} for definition of $ c_{k^*}$
\STATE \hspace{6mm} \label{linediep1}$T = T \cup \{i\}$ // $|T| \leq k^*+1 \leq k$ after this line
\STATE \hspace{6mm} \label{linedie2} $c(i) \gets c(i) + \Delta - c_{k^*}$
\STATE \textbf{Function } DecrementCounters(): \label{diediedie}
\STATE \label{notation} \hspace{2mm} // Notation: Let $c_{k^*}$ be the $k^*$-largest value, counting multiplicity,\\  \hspace{2mm} // in the multiset $\{c(j) \colon j \in T\}$.
\STATE \hspace{2mm} \textbf{for all} $j \in T$:
\STATE \hspace{4mm} $c(j) = c(j) - c_{k^*}$ \label{theline}
\STATE \hspace{4mm} \textbf{if } $c(j) \leq 0$:
\STATE \hspace{6mm} $T = T \setminus \{j\}$ \label{die2} \label{line:unassign}
\STATE \textbf{Function } Estimate($i$):
\STATE \hspace{2mm} \textbf{if } $i \in T$:
\STATE \hspace{4mm} \textbf{return } $c(i)$
\STATE \hspace{2mm} \textbf{else }
\STATE \hspace{4mm} \textbf{return } 0

\end{algorithmic}
}
\end{algorithm}
}
\subsubsection{Runtime and Accuracy Analysis}
\begin{lemma}  \label{lemma:runtime}
In Algorithm \ref{code:firstweighted}, a DecrementCounters() operation (Lines \ref{diediedie}-\ref{line:unassign}) is performed at most once every $k^*$ stream updates.
\end{lemma}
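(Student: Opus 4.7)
The plan is to show that immediately after any DecrementCounters call, the set $T$ shrinks enough that many Update calls must occur before it can be refilled to size $k$ and trigger the next DecrementCounters invocation.

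First, I would show that right after the for-loop inside DecrementCounters (line \ref{theline}) runs, at most $k^* - 1$ counters remain assigned. By definition $c_{k^*}$ is the $k^*$-th largest value, counting multiplicity, in the multiset of $k$ counter values held just before DecrementCounters is invoked. Hence at least $k - k^* + 1$ of these values are $\leq c_{k^*}$. After subtracting $c_{k^*}$ from every counter, each of those $k - k^* + 1$ values becomes $\leq 0$, so line \ref{line:unassign} removes each of them from $T$, leaving $|T| \leq k - (k - k^* + 1) = k^* - 1$. Within the same Update call, at most one new item is then inserted into $T$ on line \ref{linediep1}, so $|T| \leq k^*$ once that Update call returns.

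Next, I would use this bound to count the minimum number of Update calls between consecutive DecrementCounters invocations. DecrementCounters is invoked only from the else-branch of Update, which requires $i \notin T$ and $|T| = k$. Each Update call increases $|T|$ by at most one, and only when $i \notin T$ and $|T| < k$. Therefore at least $k - k^*$ further Update calls must elapse before $|T|$ can reach $k$ and another DecrementCounters call can occur. Under the choice $k^* \leq k/2$ (the algorithm is described with $k^* = k/2$ for simplicity), we have $k - k^* \geq k^*$, giving the claimed spacing.

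The only subtle point is handling ties among the counter values, but since $c_{k^*}$ is defined via the multiset of values the counting step goes through unchanged; no deeper obstacle appears to arise.
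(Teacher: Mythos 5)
Your proof is correct and follows essentially the same approach as the paper's: count the number of counters freed by a DecrementCounters() call (at least $k - k^* + 1$, since $c_{k^*}$ is the $k^*$-th largest value) and observe that each Update() assigns at most one counter, so at least $k^*$ updates must elapse before the table refills and another decrement can be triggered. If anything, you are slightly more careful than the paper in accounting for the one counter that may be re-assigned within the same Update() call that triggered the decrement, and in making explicit that the conclusion relies on $k^* \leq k/2$.
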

\begin{proof}
Observe that DecrementCounters() operations are performed only when all $k$ counters are assigned. 
Every time a DecrementCounters()  operation is executed, Line \ref{theline} sets at least $k^*$ counters to a non-positive value, and all such
counters become unassigned in Line \ref{line:unassign}.
Since any Update$(i, \Delta)$ operation assigns at most one counter to $i$, it follows that after a DecrementCounters() operation, at least $k^*$ 
Update() operations must occur before all counters once again become assigned.
\end{proof}

Lemma \ref{lemma:runtime} is easily seen to imply that Algorithm \ref{code:firstweighted} can be implemented in $O(1)$ amortized time per stream update.
The main observation is that $c_{k^*}$ (cf. Line \ref{notation}) can be found in time $O(k)$ using the Quickselect algorithm \cite{quickselect}. Hence,
decrement operations can be performed in $O(k)$ time, and by Lemma \ref{lemma:runtime}, decrement operations occur at most once every $k^*=\Omega(k)$ stream updates. 

\begin{theorem} \label{thm:runtime}
By maintaining all $k$ counters in a hash table of capacity supporting (amortized) constant time insertions, deletions, and lookups,
and enumeration of all counters in $O(k)$ time,
Algorithm \ref{code:firstweighted} can be implemented in amortized constant time as long as the parameter $k^* \geq \Omega(k)$.
\end{theorem}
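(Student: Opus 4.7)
The plan is to charge the total work done by the algorithm on any prefix of the stream to two separate buckets: (i) the work done in the Update function outside of a DecrementCounters call, and (ii) the work done inside DecrementCounters calls. I will argue that bucket (i) contributes $O(1)$ per update directly, while bucket (ii) costs $O(k)$ per call but occurs rarely enough that, when amortized, it also costs $O(1)$ per update.

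For bucket (i), each invocation of Update$(i,\Delta)$ outside of a DecrementCounters call performs a constant number of hash-table operations on $i$ (a lookup, and possibly one insertion with value $\Delta$ or one update of $c(i)$ to $c(i)+\Delta$ or $c(i)+\Delta-c_{k^*}$), together with a constant number of comparisons and arithmetic operations. By hypothesis, each hash-table operation runs in amortized $O(1)$ time, so bucket (i) contributes $O(1)$ amortized per update. For bucket (ii), a single DecrementCounters call does three things: it enumerates the $k$ stored (key, count) pairs, it computes $c_{k^*}$ on the multiset of counter values, and it decrements and possibly unassigns each counter. Enumeration is $O(k)$ by assumption; computing $c_{k^*}$ takes expected $O(k)$ time using Quickselect on the array of $k$ enumerated counter values (or deterministic $O(k)$ via median-of-medians, should a worst-case guarantee be desired); decrementing each counter is $O(1)$ arithmetic; and unassignment is at most $k$ hash-table deletions, each amortized $O(1)$. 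Summing, one DecrementCounters call takes $O(k)$ (expected) time.

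To complete the amortization, I appeal to Lemma \ref{lemma:runtime}, which asserts that DecrementCounters executes at most once per $k^*$ stream updates. Thus across any $T$ updates, the total cost of bucket (ii) is at most $O(k)\cdot (T/k^*) = O(Tk/k^*)$. Under the hypothesis $k^* = \Omega(k)$, this is $O(T)$, i.e., $O(1)$ amortized per stream update. Adding buckets (i) and (ii) yields the claimed amortized $O(1)$ bound.

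The main subtlety I anticipate is the dependence on Quickselect, whose $O(k)$ bound is only in expectation; a fully deterministic amortized $O(1)$ guarantee requires swapping Quickselect for a deterministic linear-time selection procedure such as median-of-medians, but this substitution is transparent to the amortization argument above. Beyond that, the only technical point is bookkeeping the deletions during unassignment against the amortized hash-table guarantee, which is immediate because at most $k$ deletions happen per DecrementCounters call.
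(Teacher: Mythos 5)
Your proof is correct and follows essentially the same route as the paper: the paper likewise charges $O(k)$ per DecrementCounters() call (computing $c_{k^*}$ via Quickselect), invokes Lemma \ref{lemma:runtime} to bound the frequency of such calls by once per $k^*$ updates, and concludes amortized $O(1)$ time when $k^* = \Omega(k)$. Your added remark about substituting median-of-medians for Quickselect to obtain a worst-case rather than expected linear-time selection is a reasonable refinement the paper does not spell out, but it does not change the structure of the argument.
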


It is also not difficult to show that this implementation satisfies the same error guarantee as Lemma \ref{lem:die} up to a factor of 2.
In fact, we prove the following \emph{tail guarantee} that is significantly stronger than Lemma \ref{lem:die} for streams with skewed frequency distributions. Our analysis exploits the key ideas of earlier analyses (of the \MG\ algorithm) by 
Agarwal et al. \cite{mergeable} and Berinde et al. \cite{berinde}.

\begin{theorem} \label{thm:errorweighted} For any $j < k^*$, Algorithm \ref{code:firstweighted} is guaranteed to return, for each $i \in [n]$, an estimate $\hat{f}_i$ satisfying
$$0 \leq f_i - \hat{f}_i \leq N^{\text{res}(j)}/(k^*-j).$$
\end{theorem}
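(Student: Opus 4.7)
The plan is to prove the theorem by combining two claims. First, I would establish that for any item $i$, the error $f_i - \hat{f}_i$ is upper bounded by $d := \sum_t c_{k^*}^{(t)}$, where $c_{k^*}^{(t)}$ denotes the value of the $k^*$-th largest counter at the $t$-th invocation of DecrementCounters and the sum is over all such invocations. Second, I would show that $d \leq N^{\text{res}(j)}/(k^*-j)$ for every $j < k^*$. Combining these yields the stated bound, and the lower bound $f_i - \hat{f}_i \geq 0$ is immediate because the algorithm never adds more than $\Delta$ to $c(i)$ on an update $(i,\Delta)$ and only decrements or unassigns thereafter.

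For the first claim, the argument tracks item $i$'s estimate and observes that any degradation can occur only at a DecrementCounters event. Between such events, an update $(i,\Delta)$ either lands entirely in $c(i)$ (when $i \in T$ or $|T| < k$) or triggers the next DecrementCounters call. At each DecrementCounters event at time $t$, the error on $i$ increases by at most $c_{k^*}^{(t)}$. There are three sub-cases: $i$ is already assigned, in which case $c(i)$ drops by $c_{k^*}^{(t)}$ (or by its value, if clamped to zero upon unassignment); $i$ is about to be newly assigned with $\Delta \geq c_{k^*}^{(t)}$, so $c(i)$ becomes $\Delta - c_{k^*}^{(t)}$ rather than $\Delta$, costing $c_{k^*}^{(t)}$; or $\Delta < c_{k^*}^{(t)}$ and the update is lost entirely, costing only $\Delta < c_{k^*}^{(t)}$. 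Summing over all decrement events gives $f_i - \hat{f}_i \leq d$.

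For the second claim, the plan is a potential-function argument on $\Phi_L := \sum_{i \in L} c(i)$, where $L$ denotes the complement of the set of the top-$j$ items by true frequency. Two ingredients drive the argument. First, the total amount added to $\Phi_L$ across the stream is at most $\sum_{i \in L} f_i = N^{\text{res}(j)}$, since each update $(i, \Delta)$ to a light item $i \in L$ raises $\Phi_L$ by at most $\Delta$. Second, each DecrementCounters event at time $t$ removes at least $(k^* - j) \cdot c_{k^*}^{(t)}$ from $\Phi_L$: among the $k^*$ largest counters by value at that moment, at most $j$ can be assigned to the $j$ heavy items, so at least $k^* - j$ of those top counters are assigned to light items, and each such counter has value $\geq c_{k^*}^{(t)}$ and therefore loses exactly $c_{k^*}^{(t)}$ in the decrement. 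Since $\Phi_L \geq 0$ is preserved throughout the stream, the total removals cannot exceed the total additions, so $(k^* - j) \cdot d \leq N^{\text{res}(j)}$, i.e.\ $d \leq N^{\text{res}(j)}/(k^* - j)$.

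The main obstacle I expect is the case analysis in Claim 1, in particular verifying that the ``lost update'' scenario (when $|T| = k$ and the incoming weight is smaller than $c_{k^*}^{(t)}$) contributes at most $c_{k^*}^{(t)}$ to the error and is not double-counted against any other per-decrement charge. Because the lost update occurs simultaneously with a DecrementCounters event that triggers it, it folds naturally into the same per-event budget of $c_{k^*}^{(t)}$. Once this uniform charging is in place, Claim 2 is a routine double-counting of counter sums restricted to light items, relying only on the definition of the $k^*$-th largest counter and the non-negativity of $\Phi_L$.
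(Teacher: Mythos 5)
Your proof is correct, but it takes a genuinely different route from the paper's. The paper proves the theorem by induction on the stream length, establishing the key invariant $E_n \leq (N_n - C_n)/k^*$ (Lemma~\ref{key}), where $C_n$ is the total counter mass, and then obtains the general-$j$ tail bound by a separate bootstrapping step: each of the top-$j$ items retains a counter of value at least $f_i - E_n$, so $C_n \geq \sum_{i=1}^{j} f_i - jE_n$, which combined with the invariant yields $(k^*-j)E_n \leq N^{\text{res}(j)}$. You instead bound the error directly by the total decrement $d = \sum_t c_{k^*}^{(t)}$ and control $d$ via a potential restricted to the light items: each DecrementCounters event removes at least $(k^*-j)\,c_{k^*}^{(t)}$ from the light-item counter mass (at most $j$ of the top $k^*$ counters can belong to heavy items), while the total ever added to that mass is at most $N^{\text{res}(j)}$. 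Both arguments rest on the same two facts --- error grows by at most $c_{k^*}$ per decrement, and each decrement destroys at least $k^*$ (respectively $k^*-j$) counter-units per unit of $c_{k^*}$ --- but your decomposition handles general $j$ in one shot, avoids the induction, and is closer in spirit to Berinde et al.'s original tail analysis. What the paper's route buys is the intermediate, data-dependent bound $E \leq (N-C)/k^*$ in terms of the observable counter sum $C$, which is reused verbatim in the analysis of the merging procedure in Section~\ref{sec:merge}; your argument does not directly produce that form, though it could be adapted to do so. Your case analysis at decrement events (including the ``lost update'' sub-case, which is correctly folded into the same per-event budget of $c_{k^*}^{(t)}$) and the count of at least $k^*-j$ light counters among the top $k^*$ are both sound.
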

\begin{proof}
It is obvious from the description of the algorithm that $0 \leq f_i - \hat{f}_i$ for all $i$.
To prove the second inequality, we will proceed by induction on the number
of stream updates $n$. For clarity, we introduce some notation. 
Let $f_{i, n} = \sum_{j \leq n: i_j = i} \Delta_j$ denote the (weighted) frequency of item $i$ 
in the first $n$ stream updates, and similarly let $\hat{f}_{i, n}$ denote the output of Estimate($i$)
after the first $n$ stream updates.
Let $E_n = \max_i f_{i, n} - \hat{f}_{i, n}$ denote the maximum error of any estimate that might be returned by the summary
after processing $n$ stream updates. 
Let $C_n=\sum_{i \in T} c(i)$ denote the sum of all counts in the summary after the first $n$ stream updates have been processed. Let $N_n = \sum_{i=1}^n \Delta_i$
denote the weighted stream length after $n$ stream updates.

For expository purposes, we establish the second inequality for $j=0$,
before establishing it for any $j < k^*$.

\medskip\noindent \textbf{Establishing the second inequality for $j=0$.}
In the case $j=0$
it is sufficient to show establish the following lemma.
\begin{lemma} \label{key} $ E_n \leq (N_n-C_n)/k^*$. \end{lemma}
\begin{proof}
We show this by induction on $n$. Clearly it holds for $n=0$.
Assume by way of induction that it holds for streams consisting of $n-1$ updates.

\medskip
\noindent \textit{Proof Outline for Lemma \ref{key}.}
The key observations are (a) that error is only introduced by calls to DecrementCounters()
and (b) whenever DecrementCounters() is called,
at least $k^*$ counters are reduced by $c_{k^*}$. Hence, 
the error in any estimate increases by $c_{k^*}$, and the difference between
$N$ and $C$ increases by at least $k^* \cdot c_{k^*}$. This ensures that the
difference between $N$ and $C$ is always at least $k^*$ times the error
of the summary, as desired. The rest of the proof
makes this argument formal.

\medskip

Suppose first that the $n$th stream update $(i_n, \Delta_n)$ does not cause DecrementCounters() to be called.
In this case, after processing the $n$th stream update, $f_{i_n}$ and $\hat{f}_{i_n}$
both increase by $\Delta_n$, and for any $i \neq i_n$, $f_i$ and $\hat{f}_i$ are unchanged. Hence,
$E_n=E_{n-1}$, i.e., the error of the summary is unchanged by the $n$th stream update. 
Meanwhile, $N_n$ and $C_n$ both increase by $\Delta_n$.
It follows that $(N_n-C_n) = (N_{n-1}-C_{n-1})$. Putting these facts together and invoking the inductive hypothesis,
we conclude as desired that
 \begin{equation*} E_n = E_{n-1} \leq (N_{n-1}-C_{n-1})/k^* = (N_{n}-C_{n})/{k^*} .\end{equation*}
 
 Next, suppose that the $n$th stream update does cause DecrementCounters() to be called. 
 We assume w.l.o.g. that $\Delta_n \leq c_{k^*}$; if not, then 
 we instead treat update $(i_n, \Delta_n)$ as two separate 
 updates $(i_n, c_{k^*})$ followed by $(i_n, \Delta_n-c_{k^*})$, as the algorithm
 behaves identically in this case.
 
Then the largest $k^*$ counters are each decremented by $c_{k^*}$, 
 and no counter is incremented (as Lines \ref{linediep1}-\ref{linedie2} are not executed due to the assumption).
 Hence, 
 \begin{align} \notag N_n - C_n \!=\!  & N_{n-1} \!+\! \Delta_n \!-\! C_{n} \geq N_{n-1}\! + \! \Delta_n \!-\! (C_{n-1}\! -\! k^* \cdot c_{k^*}) \\
 \notag   = & N_{n-1}  - C_{n-1} +  k^* \cdot c_{k^*} + \Delta_n\\
  \geq & N_{n-1}  - C_{n-1} +  k^* \!\cdot\! c_{k^*}. \label{frigg}\end{align}
 Meanwhile, it holds that \begin{equation} \label{frigg2} E_n \leq E_{n-1} + c_{k^*}.\end{equation} 
To see this, first observe that, for all $i \neq i_n$, 
 $\hat{f}_{i, n} - \hat{f}_{i, n-1} \leq c_{k^*}$, and $f_{i, n} = f_{i, n-1}$.
 Hence, $f_{i, n} - \hat{f}_{i, n} \leq  f_{i, n-1} - \hat{f}_{i, n-1} + c_{k^*}$, i.e.,
 the error in the estimate for $i$ increased by at most $c_{k^*}$ when processing the $n$th stream update.
 
 To see that Equation \eqref{frigg2} also holds for $i_n$ itself, observe first that
 the fact that DecrementCounters()
 was called when processing the $n$th stream update means that $i_n$ was not assigned a counter after $n-1$
 stream updates. Hence, $\hat{f}_{i_n,n-1}=0$. Meanwhile, the assumption that $\Delta_n \leq c_{k^*}$ then ensures
 that $\hat{f}_{i_n, n} = 0$ as well. And clearly
 $f_{i_n, n} = f_{i_n, n-1} + \Delta_n$. Putting these facts together implies that
 $$f_{i_n, n} - \hat{f}_{i_n, n} \!=\! f_{i_n, n-1} \! + \! \Delta_n \!-\! \hat{f}_{i_n, n-1} \leq f_{i_n, n-1} + c_{k^*} \!-\! \hat{f}_{i_n, n-1}.$$
That is, the error in the estimate for $i_n$ increased by at most $c_{k^*}$ when processing the $n$th stream update.
Equation \eqref{frigg2} follows.

Combining Equations \eqref{frigg} and \eqref{frigg2} with the inductive hypothesis, we conclude as desired that
\begin{align*} E_n & \leq E_{n-1} + c_{k^*} \leq (N_{n-1} - C_{n-1})/k^* + c_{k^*}\\
& \leq (N_n - C_n - k^* c_{k^*})/k^* + c_{k^*} = (N_n - C_n)/k^*.
 \end{align*}
 \end{proof}
 \medskip
\noindent \textbf{Establishing the second inequality for any $j < k^*$.}
For notational simplicity, let us assume that $f_1 \geq f_2 \geq \dots \geq f_j$. 
Lemma \ref{key} implies that
\begin{equation}
\label{ihatethis} C_n \leq N_n - k^* \cdot E_n.
\end{equation}
Meanwhile, it is clear that at the end of the stream, $c(i) \geq f_i - E_n$ for all $i$,
which implies that \begin{equation}
\label{seriouslyl} C_n \geq \sum_{i=1}^{j} (f_i - E_n) \geq \left(\sum_{i=1}^{j} f_i\right) - j \cdot E_n. \end{equation}
Putting Equations \eqref{ihatethis} and \eqref{seriouslyl} together, we conclude that 
$$\left(\sum_{i=1}^{j} f_i\right) -  j \cdot E_n \leq N_n - k^* \cdot E_n,$$
which in turn implies that
$$(k^*-j) \cdot E_n \leq N_n  - \sum_{i=1}^j f_i = N^{\text{res}(j)}.$$
Hence, $E_n \leq N^{\text{res}(j)}/(k^*-j).$
This completes the proof.
\end{proof}

\subsection{The Final Algorithm}
Algorithm \ref{code:firstweighted} has two disadvantages that are important in practice.
First, an extra $k$ words of space are required during every DecrementCounters() operation,
on top of the $\Theta(k)$ words of space required for that hash table that maintains the counters.
This extra space is required in order to find the $k^*$'th largest counter, as the Quickselect algorithm (or any sorting procedure)
cannot be done in place without destroying the hash table. This disadvantage nearly
doubles the space usage of the algorithm.

Second, during each DecrementCounters() operation,
the algorithm must make an extra pass through the summary to find the $k^*$'th largest counter, before decrementing all counters and discarding the
non-positive ones. As DecrementCounters() operations is the time bottleneck in practice,
this significantly slows the concrete efficiency of the algorithm.

In order to address both of these disadvantages at once, we make the following observation.
In neither the proof of Theorem \ref{thm:runtime} (establishing constant amortized update time for Algorithm \ref{code:firstweighted}) nor Theorem \ref{thm:errorweighted} (establishing strong bounds on the error of Algorithm \ref{code:firstweighted}) is it crucial that we decrement by the $k^*$'th largest counter value for $k^*=k/2$. 
The property that ensures constant amortized update time as per Theorem \ref{thm:runtime} is that each DecrementCounters() operation
decrements by a value that is \emph{larger than} a constant fraction of the counters.
The property exploited to establish the error bounds of Theorem \ref{thm:errorweighted} is that each DecrementCounters() operations decrements by a value that is \emph{smaller than} a constant fraction of the counters.

Both properties can be ensured by modifying the DecrementCounters() function as follows. We randomly sample $\ell$ counters in the hash table, where $\ell=O(\log n)$ is a suitably chosen parameter. We then compute the median of the sampled counters, and decrement all counters by this value, discarding any counters that are non-positive after decrementing. We refer to this algorithm as the \emph{Reduce-By-Sample-Median} (\SMED) extension of the \MG\ algorithm. See Algorithm \ref{code:final} for pseudocode.

\begin{theorem} \label{thm:runtimefinal}
There is an $\ell = O(\log n)$ such that Algorithm \ref{code:firstweighted} can be implemented to run in amortized constant time with probability at least $1-1/n$.
\end{theorem}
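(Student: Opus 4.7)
The plan is to implement DecrementCounters() in the \SMED\ variant by drawing $\ell$ counters uniformly at random from the hash table, computing their median $m$ via linear-time selection (e.g., Quickselect), and then making a single pass over the $k$ assigned counters to subtract $m$ and unassign any that become non-positive. Each invocation therefore costs $O(k+\ell)$ time and $O(\ell)$ auxiliary storage. My aim is to show that, with probability at least $1-1/n$, every invocation is ``balanced'' in the sense that the rank of $m$ among the current counter values lies strictly between $k/4$ and $3k/4$; this enforces the accounting of Lemma \ref{lemma:runtime} with effective parameter $k^\ast = k/4$.

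The core step is a Chernoff tail bound on the rank of $m$. Fix one invocation and let $v_1 \leq \cdots \leq v_k$ denote the current counter values in sorted order, and let $\mathcal{E}$ be the event $v_{k/4} \leq m \leq v_{3k/4}$. The event $m > v_{3k/4}$ would force more than $\ell/2$ of the $\ell$ i.i.d.\ samples to hit the top $k/4$ counters, a Bernoulli$(1/4)$ process; a standard Chernoff bound gives failure probability $\exp(-\Omega(\ell))$, and $m < v_{k/4}$ is symmetric. Choosing $\ell = C \log n$ for a sufficiently large absolute constant $C$ drives the per-call failure probability below $n^{-3}$. Under $\mathcal{E}$, the lower bound $m \geq v_{k/4}$ guarantees that at least $k/4$ counters are unassigned during that invocation.

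Replaying the accounting in Lemma \ref{lemma:runtime} with $k^\ast = k/4$ then shows that at least $k/4$ Update() operations must separate consecutive DecrementCounters() calls, so there are at most $4n/k$ such calls over the stream. A union bound then gives that $\mathcal{E}$ holds at every call with probability at least $1 - (4n/k)\cdot n^{-3} \geq 1-1/n$; under this event, total DecrementCounters() work is $O((k+\ell)\cdot n/k) = O(n + n\log n / k)$, which amortizes to $O(1)$ per update in the standard regime $k = \Omega(\log n)$. The main (mild) obstacle is that $\mathcal{E}$ must hold at \emph{every} invocation, not just one---this is precisely what forces the logarithmic dependence $\ell = \Theta(\log n)$; a lesser technicality is drawing $\ell$ uniform samples from the hash table in $O(\ell)$ time, which can be handled by sampling random slots of the underlying array and resampling empty ones under standard load-factor assumptions.
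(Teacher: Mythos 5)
Your proposal is correct and follows essentially the same route as the paper's proof: a Chernoff bound showing that, for $\ell = \Theta(\log n)$, the sample median dominates a constant fraction of the counters at each DecrementCounters() call (the paper uses $k/3$ where you use $k/4$), a union bound over all such calls, and then the accounting of Lemma \ref{lemma:runtime} and Theorem \ref{thm:runtime}. Your extra observations --- the two-sided rank bound (which the paper only needs later, for Theorem \ref{thm:errorfinal}), the $k = \Omega(\log n)$ caveat so that the $O(\ell)$ sampling cost amortizes, and the detail of drawing uniform samples from the hash table --- are sensible refinements of points the paper leaves implicit.
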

\begin{proof}
Standard Chernoff bounds (e.g., \cite[Theorem 4.4]{mu}) imply that if $\ell \geq c \cdot \log n$ for a suitably large constant $c > 0$,
then with probability at least $1-1/n^2$ the median $c^*$ of $\ell$ counters sampled from $\{c(i) \colon i \in T\}$ 
will satisfy $c^* \geq c(i)$ for at least $k/3$ values of $i$ (where recall that $k=|T|$ whenever a DecrementCounters() operation occurs).
By a union bound, we conclude that this event occurs for \emph{all} DecrementCounters() operations
with probability at least $1-1/n$. In this event, DecrementCounters() operations occur
at most once every $k/3$ stream updates. The claimed time bound now follows
from the same argument as in Theorem \ref{thm:runtime}.
\end{proof}

\begin{theorem} \label{thm:errorfinal}
For any constant $c > 2$, there is an $\ell = O(\log n)$ such that the following holds with probability at least $1-1/n$.
For any $j < k/c$, Algorithm \ref{code:final} will return, for each $i \in [n]$, and estimate $\hat{f}_i$ satisfying
$$0 \leq f_i - \hat{f}_i \leq N^{\text{res}(j)}/(k/c-j).$$

\end{theorem}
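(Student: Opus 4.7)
The plan is to mirror the proof of Theorem~\ref{thm:errorweighted}, having the sampled median $d$ used by Algorithm~\ref{code:final} play the role of $c_{k^*}$ and $k/c$ play the role of $k^*$, while absorbing all randomness into a single high-probability event over all DecrementCounters calls.

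First I would prove the needed concentration statement. Fix a constant $c > 2$, and at a given DecrementCounters call let $v^{(1)} \ge v^{(2)} \ge \cdots \ge v^{(k)}$ be the sorted counter values. At most $\lceil k/c\rceil - 1$ counters have value strictly greater than $v^{(\lceil k/c\rceil)}$, so the fraction of sampled counters whose value lies strictly above this threshold has expectation strictly less than $1/c$. The sampled median $d$ exceeds $v^{(\lceil k/c\rceil)}$ only if more than $\ell/2$ of the $\ell$ samples have value above that threshold, which by a standard multiplicative Chernoff bound (\cite[Theorem~4.4]{mu}) has probability $\exp(-\Omega(\ell))$ since $1/c < 1/2$. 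Choosing $\ell = \Theta(\log n)$ with a sufficiently large constant makes this at most $1/n^2$ per decrement; since there are at most $n$ DecrementCounters calls in total, a union bound yields the event $\mathcal{E}$ that $d \le v^{(\lceil k/c\rceil)}$ at every DecrementCounters call, with probability at least $1 - 1/n$.

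Conditioning on $\mathcal{E}$, I would re-run the induction of Lemma~\ref{key} to show $E_n \le (N_n - C_n)/(k/c)$, with $d$ substituted for $c_{k^*}$. The no-decrement case is identical. For the decrement case, the WLOG-split of an update $(i_n, \Delta_n)$ with $\Delta_n > d$ into $(i_n, d)$ followed by $(i_n, \Delta_n - d)$ carries over unchanged, the error-growth bound $E_n \le E_{n-1} + d$ is obtained exactly as before, and the one step requiring $\mathcal{E}$ is the lower bound on $N_n - C_n$: under $\mathcal{E}$ at least $\lceil k/c \rceil$ counters have value $\ge d$, and each such counter contributes exactly $d$ to the decrease of $C$ (either surviving with value reduced by $d$, or being removed with value exactly $d$), giving total decrease at least $(k/c) d$. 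Combining these exactly as in Lemma~\ref{key} yields the claim. The tail extension to any $j < k/c$ is then the same projection argument as at the end of the proof of Theorem~\ref{thm:errorweighted}: assume WLOG $f_1 \ge \cdots \ge f_j$, combine $C_n \ge \sum_{i=1}^{j}(f_i - E_n)$ with $C_n \le N_n - (k/c)\, E_n$, and solve for $E_n \le N^{\mathrm{res}(j)}/(k/c - j)$.

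The main obstacle is the concentration step. The hypothesis $c > 2$ is exactly what ensures the expected number of samples above the threshold $v^{(\lceil k/c\rceil)}$ is strictly below $\ell/2$, and hence that the relevant Chernoff tail is exponentially small; any approach that tried to take $c = 2$ would lose this gap. One also has to be slightly careful about ties among counter values: the contribution-to-$C$ calculation handles both $v_i > d$ and $v_i = d$ (both contribute $d$), which is why the non-strict inequality $d \le v^{(\lceil k/c\rceil)}$ from the concentration step is sufficient to drive the inductive argument.
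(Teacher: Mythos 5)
Your proposal is correct and follows essentially the same route as the paper: the paper likewise reduces the claim to showing that, with probability $1-1/n$ over all DecrementCounters() calls, the sampled median is at most the $(k/c)$-th largest counter value (via a Chernoff bound with $\ell = O(\log n)$ and a union bound), and then observes that the induction behind Theorem~\ref{thm:errorweighted} goes through when the decrement value is merely \emph{at most} $c_{k^*}$ for $k^* = k/c$. Your write-up simply fills in the Chernoff calculation and the re-run of Lemma~\ref{key} in more detail than the paper does.
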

\begin{proof}
It is easy to see that the proof of Theorem \ref{thm:errorweighted} applies even if Algorithm \ref{code:final}
decrements all counters by \emph{at most} $c_{k^*}$ rather than exactly $c_{k^*}$.
Hence, it suffices to show that 
for any constant $c > 2$, there is an $\ell = O(\log n)$ such that with probability at least $1-1/n^2$,
the median $c^*$ of $\ell$  
counters sampled from $\{c(i) \colon i \in T\}$ satisfies 
$c^* \leq c(i)$ for at least $k/c$ values of $i \in T$.
Indeed, this property holds by standard Chernoff bounds  \cite[Theorem 4.4]{mu}. By combining this property with
a union bound over all DecrementCounters() operations, we conclude that with probability at least $1-1/n$,
Algorithm \ref{code:final} always decrements all counters by at most $c_{k^*}$ where $k^*=k/c$. 
\end{proof}

{\small
\begin{algorithm}[t]
{\small
\caption{\small{The Reduce-By-Sample-Median (\SMED) Extension of the \MG\ Algorithm to Weighted Updates}}\label{code:final}
\begin{algorithmic}[1]{
\STATE // Notation: $k^*$ is a parameter of the algorithm
\STATE \textbf{Algorithm}: Initial-Algorithm(k):}
\STATE \hspace{2mm} $T \gets \emptyset$ // $T$ is set of items assigned a counter
\STATE \textbf{Function } Update($i, \Delta$):
\STATE \hspace{2mm} \textbf{if } $i \in T$:
\STATE \hspace{4mm} $c(i) \gets c(i) + \Delta$
\STATE \hspace{2mm} \textbf{else if } $|T| < k$:
\STATE \hspace{4mm} $T = T \cup \{i\}$
\STATE \hspace{4mm} $c(i) \gets c(i) + \Delta$
\STATE \hspace{2mm} \textbf{else }:
\STATE \hspace{4mm} DecrementCounters()
\STATE \hspace{4mm} \label{linediefinal} \textbf{if } $\Delta \geq c^*$ // See Line \ref{finalline} for definition of $ c^*$
\STATE \hspace{6mm} \label{linediep1final}$T = T \cup \{i\}$ // $|T| \leq k^*+1 < k$ after this line
\STATE \hspace{6mm} \label{linedie2final} $c(i) \gets c(i) + \Delta - c^*$
\STATE \textbf{Function } DecrementCounters(): \label{diediediefinal}
\STATE \label{notationfinal} \hspace{2mm} // Notation: $\ell$ is a parameter
\STATE \hspace{2mm} \label{finalline}// Randomly sample $\ell$ counters from $T$\\
 \hspace{2mm} // and let $c^*$ be the median of the samples.
\STATE \hspace{2mm} \textbf{for all} $j \in T$:
\STATE \hspace{4mm} $c(j) = c(j) - c^*$ \label{thelinefinal}
\STATE \hspace{4mm} \textbf{if } $c(j) \leq 0$:
\STATE \hspace{6mm} $T = T \setminus \{j\}$ \label{die2final} \label{line:unassignfinal}
\STATE \textbf{Function } Estimate($i$):
\STATE \hspace{2mm} \textbf{if } $i \in T$:
\STATE \hspace{4mm} \textbf{return } $c(i)$
\STATE \hspace{2mm} \textbf{else }
\STATE \hspace{4mm} \textbf{return } 0

\end{algorithmic}
}
\end{algorithm}
}

\subsection{Additional Implementation Details}
We have completed a highly optimized implementation of Algorithm \ref{code:final} in Java
and incorporated it into an open source library \cite{datasketches} that is 
widely deployed within Yahoo and several other companies. We briefly present additional details of our implementation not covered above.

\subsubsection{Variant of Function Estimate()} \label{sec:estimate}
In our implementation, we modified the function Estimate() compared to the pseudocode 
of Algorithm \ref{code:final}. Specifically, our implementation keeps a variable $\text{offset}$ that tracks the sum of all 
decrement values $c^*$ over all DecrementCounters() operations. If $i$ is assigned a counter,
then our implementation's estimate $\hat{f}_i$ for $f_i$ is $c(i) + \text{offset}$. 
If $i$ is not assigned a counter, the returned estimate is 0. 

This represents a hybrid of the estimates returned by the \MG\ and \SS\ algorithms (cf. Algorithms \ref{code:mg} and \ref{code:ss} respectively). It behaves analogously to the \MG\ algorithm on items $i$ not assigned counters -- outputting 0 in this case --
and (as explained below) analogously to the \SS\ algorithm on items $i$ that are assigned counters. 

Our reasoning for this choice is as follows. Before the \MG\ and \SS\ algorithms were realized to be isomorphic \cite{mergeable}, Cormode and Hadjieleftheriou \cite{ch} found that the estimates returned by \SS\ were superior in practice, despite the fact that the worst-case error bounds for both algorithms are essentially identical. The reason for this
is that the \SS\ estimates are ``more aggressive'' -- in particular, whereas the \MG\ algorithm will always strictly underestimate the 
frequency of an item (unless fewer than $k$ items appear in the stream), the \SS\ algorithm may well output \emph{exactly correct} estimates of frequent items.
The estimates returned by our actual implementation inherit this desirable property of \SS.

On the other hand, the \SS\ algorithm has the undesirable property that it will always strictly \emph{overestimate} 
the frequency of an item that does not appear in the stream (unless fewer than $k$ items appear in the stream),
whereas the \MG\ algorithm always outputs exactly correct estimates (namely, 0) for such items. 
The estimates returned by our actual implementation inherit this desirable property of \MG.

We remark that, like \MG\ and \SS\ themselves, our implementation is also capable of outputting tight upper and lower bounds on the frequency of $i$.
The upper bound is $c(i) + \text{offset}$, while the lower bound is $c(i)$ (or 0 if $i$ is not assigned a counter).

\subsubsection{Choice of $\ell$}
We performed numerical calculations establishing that the choice $\ell=1024$ guarantees 
that the following holds. 
For streams of length $N \leq 10^{20}$,
Algorithm \ref{code:final}, with probability at least $1 - 1.5 \times 10^{-8}$, 
returns estimates $\hat{f}_i$ for all $i \in [m]$ satisfying the following error guarantee:
$$0 \leq f_i - \hat{f}_i \leq N^{\text{res}(j)}/(.33 \cdot k-j).$$
Hence, we set $\ell=1024$ in our implementation.

\subsubsection{Hash Table and Exact Space Usage} 
We experimented with a wide variety of hash table implementations for storing counters. While many 
performed similarly, 
we discovered that a good choice is to use linear probing. Specifically, keys and values are kept
in two parallel arrays of length $L \approx 4k/3$ (the length is rounded up to the closest power of two
to ensure fast modular arithmetic on indices into the arrays, in case a probe ``falls off the end'' of the array).
We keep an additional array 
of state variables, where a state of 0 indicates that a cell is unoccupied,
and a positive state equals the distance (plus one) of the stored key $x$ from its preferred cell $h(x)$,
where $h \colon [m] \to L$ is the hash function.
Insertion and lookup operations are processed as in standard linear probing: the implementation
successively looks at cells $h(x), h(x)+1 (\text{mod } L), \dots, $ until the key is found or an empty cell is discovered.

The other functionality that must be supported by our hash table is decrementing all values by a specified 
amount $c^*$ and deleting all that become non-positive. This is done by starting at the end (i.e., largest index) 
of a run (i.e., a contiguous sequence) of occupied cells; each time a non-positive value is identified, the implementation deletes it,
and then works its way toward the end of the run, shifting keys and values forward as necessary to ensure
that all future lookup and insert operations behave correctly. 

State variables need only consist of 2 bytes with overwhelming probability (we performed numerical calculations showing that when $k \leq 2^{32}$ and $L = 4k/3$, the probability that at any given time a state variable exceeds $2^{14}$ is at most $10^{-250}$).
Hence, assuming item identifiers and approximate counts are 8 bytes each, and that $4k/3$
is a power of 2, our implementation uses $18 \cdot (4/3) \cdot k = 24 \cdot k$ bytes, plus a small constant number of additional bytes.

\section{Merging Summaries}
\label{sec:merge}
An extremely important functionality in real systems is the ability to efficiently
merge the summaries of separate data sets (via an arbitrary \emph{aggregation tree})
to obtain a summary of the union of the data sets,
without increasing the size of the summary or its approximation
error. This mergeability property enables a wide variety 
of scenarios. For example, consider a system that keeps a separate summary for many subsets of a large dataset.
This might occur if, e.g., a company keeps a separate summary for data obtained in each 1-hour period over the course of several years.
The company may further subdivide the data by, say, geographic region or other attributes. At query time, an analyst can specify which data are of interest 
(e.g., the analyst may be interested in users from a given state who used the company's service in a given 48-hour window). The summaries can then
be seamlessly merged to answer approximate queries about the data of interest. In this important example, merging is occurring at query time, and millions or billions of summaries may need to be merged to answer the query. Hence, merging must be extremely efficient, justifying our focus in this section on developing a highly optimized merging procedure. 

As a second example, mergeable summaries enable a single large dataset to be partitioned arbitrarily among many machines,
with each machine processing each partition separately. The resulting 
summaries can then be merged in any order to obtain a summary for the complete dataset.
Hence, mergeable summaries enable datasets to be automatically processed in a parallel and distributed manner.

As a final example, consider multiple datasets that are geographically distributed. It may be infeasible to transmit the entirety of any
one of the datasets to a central machine. If an analyst is interested in the union of the datasets, mergeable summaries
enable each dataset to be
summarized separately, and merely the summaries transmitted to one machine.

\subsection{Earlier Merging Procedures} 
\paragraph*{The Procedure of Berinde et al. \cite{berinde}}
The merging procedure we describe in this section is highly similar to a proposal of Berinde et al. \cite{berinde},
who described their proposal in the context of arbitrary counter-based summaries. Berinde et al. suggested that, in order
to merge many summaries, one should treat each counter in each summary as a stream update (so a counter $c(i)$
assigned to item $i$ is treated as a stream update $(i, c(i))$) and feed all the stream updates into a single new summary.
They proved that if the counter-based algorithm when run on a single stream can return estimated frequencies $\hat{f}_i$ for each $i$
satisfying $|f_i - \hat{f}_i| \leq N^{\text{res}(j)}/(k-j),$
then the output of the merging procedure can return estimated frequencies $\hat{f}_i$ satisfying 
\begin{equation} \label{merge} |f_i - \hat{f}_i| \leq 3 \cdot N^{\text{res}(j)}/(k-2j), \end{equation}
where the quantities $f_i$ and $N^{\text{res}(j)}$ in Equation \eqref{merge} refer to the \emph{concatenation}
of the streams that were fed into each of summaries being merged.

The proposal and accuracy analysis of Berinde et al. \cite{berinde} have two significant limitations. 
First, the accuracy analysis only guarantees that the merged summary has at most \emph{triple} the error of the constituent summaries.
Second, as pointed out in \cite{mergeable}, when merging many summaries, the proposal does not support an 
arbitrary aggregation tree (i.e., it is not possible to repeatedly merge pairs of summaries in an arbitrary manner until a single summary is obtained). Rather, each summary must be merged ``into'' a single output summary. This second limitation is really a different manifestation of the first: the reason that Berinde et al. cannot support an arbitrary aggregation tree is that their error analysis cannot rule out the possibility that each edge in the tree causes a tripling of the error. Hence, the final error bound for an arbitrary aggregation tree would be \emph{exponentially} large in the number of summaries being merged.

The merging procedure we propose in this section is closely related to that Berinde et al. However, we address both of the limitations above by giving a much tighter error analysis. Moreover, since our merging
procedure uses our algorithms' Update($i, \Delta$) procedure as a black box, it inherits
the benefits of our Update($i, \Delta$) procedure (e.g., amortized constant runtime) relative to prior algorithms
for weighted streams.

Our procedure has the additional advantage of not requiring the allocation of a new output summary for a merge; rather, when merging two summary via our procedure, we seamlessly update the state of one of the two summaries and output the result (discarding the other summary once merging is complete). 

We mention in passing that other merging procedures for frequent items algorithms have been proposed \cite{merge1, merge2}, but they
provide weaker error guarantees than Berinde et al. (i.e., the error increases with each merge step).
\paragraph*{The Procedure of Agarwal et al. \cite{mergeable}}
Agarwal et al. \cite{mergeable} provided a merging procedure for the \MG\ and \SS\ algorithms that does support arbitrary aggregation trees. In this discussion, we focus on the case of the \MG\ algorithm for simplicity. 
To merge two \MG\ summaries with $k$ counters each, their merging procedure works as follows. The counters from each summary are first added together (so if an item $i$ is assigned a counter in both summaries, its count is set to the sum of the two counters). The counters are then
sorted, and all but the top $k$ counters are discarded. 

Suppose that for $j=\{1, 2\}$, the $j$'th \MG\ summary is a summary of a stream $\sigma_j$ of weighted length $N_j$, and let $N=N_1 + N_2$ denote the weighted length of the concatenated stream $\sigma := \sigma_1 \circ \sigma_2$. Let $f_i$ denote the frequency of $i$ in $\sigma$. Agarwal et al. proved that the resulting summary returns estimates $\hat{f}_i$ for all $i$ such that: \begin{equation} \label{swish} 0 \leq f_i - \hat{f}_i \leq (N-C)/(k+1),\end{equation} where $C$ is the sum of the counters in the merged summary.

\medskip \noindent \textit{Implementing the Merging Procedure.}
\label{sec:variant}
The natural way to implement the merging procedure of Agarwal et al. is the following. 
First, allocate a new hash table capable of storing up to $2k$ counters.
Second, iterate through the counters $(i, c(i))$ in both summaries, adding all $(i, c(i))$ pairs into the new hash table (if key $i$ already resides in the hash table, then add $c(i)$ to its value). This has the effect of summing counters.
Third, sort the $(i, c(i))$ pairs in the hash table by $c(i)$. Fourth, create a new summary of capacity $k$ and insert the top $k$ $(i, c(i))$ pairs into the summary, discarding earlier summaries and the intermediate hash table used for merging.

We observe that an alternative implementation of the third and fourth steps above is to use Quickselect \cite{quickselect}
to identify the $k$th largest counter $c_k$ in the hash table. One then makes an additional pass through the counters, 
identifying all that are at least as large as $c_k$, and feeding each into the new summary. This implementation runs in time $O(k)$.

\medskip \noindent \textit{Disadvantages.} This merging procedure has a number of disadvantages in practice. First, it requires allocating
a hash table of capacity $2k$ to add the counters, which potentially doubles the combined space usage of the two constituent \MG\ summaries. 
Second, the original sorting-based proposal of Agarwal et al. requires
 $\Omega(k \log k)$ time. As we observe above, one can use Quickselect 
 to reduce the runtime to $O(k)$, but the Big-Oh notation hides a substantial constant factor,
 and running Quickselect on up to $2k$ counters 
 is a runtime bottleneck in practice. 
Third, while this merging procedure easily extends to the case where the two summaries have differing numbers of counters (say $k_1$ and $k_2$ counters respectively), it requires $\Omega((k_1 + k_2) \cdot \log(k_1 + k_2))$ time. One can hope to achieve 
$O(\min(k_1, k_2))$ time. 

In this section, we give a different merging procedure that mitigates these disadvantages. Our procedure applies generically to any counter-based algorithm \emph{that can efficiently handle weighted updates}. We now describe the procedure in this level of generality. 

\subsection{Our Merging Procedure}
\paragraph*{Description of our Merging Procedure} For simplicity, let us assume that both
summaries to be merged are configured to store $k$ counters.
Let $k'$ denote the number of (non-zero) counters in the second summary to be merged. In our merging procedure, the $k'$ non-zero counters from the second summary are treated as $k'$ stream updates, and each stream update is fed into the first summary using the summaries Update() procedure. That is, for each item $i$ assigned a counter $c(i)$ in the second summary, then the function Update($i, c(i)$) is called on the first summary. See Algorithm \ref{code:merge} for pseudocode.

{\small
\begin{algorithm}[t]
{\small
\caption{\small{Our Merge Procedure}}\label{code:merge}
\begin{algorithmic}[1]{
\STATE // Notation: $T_1, T_2$ are both counter-based summaries,
\STATE // i.e., $T_2$ is a set of items $i$ and approximate counts $c(i)$, and 
\STATE // $T_1$ comes with an Update$_{T_1}()$ procedure.
\STATE \textbf{Function}: Merge($T_1, T_2$):}
\STATE \hspace{2mm} \textbf{for each } $i \in T_2$:
\STATE \hspace{4mm} Update$_{T_1}(i, c(i))$
\STATE \hspace{2mm} \textbf{return} $T_1$
\end{algorithmic}
}
\end{algorithm}
}

\paragraph*{Space Usage} Clearly, this merging procedure avoids the $2k$ extra words of space (on top of the space to store the two summaries being merged) required by the merging procedure of Agarwal et al. \cite{mergeable}. It uses no more space than that already used by the two summaries being merged, 
 seamlessly updating the state of one of the summaries, outputting the result, and discarding the other summary upon completion. 
 
\paragraph*{Speed} We can guarantee that our merging procedure runs in time $O(k)$. This follows from the fact that
the Update() operation for our algorithms take $O(1)$ time unless a DecrementCounters() operation is triggered, 
and a DecrementCounters() operation takes $O(k)$ time and is triggered at most once every $\Omega(k)$
stream updates. 

One might hope that if $k' \ll k$ (which will be the case, e.g., if the second summary is significantly smaller than the first), then 
the merging procedure runs in $O(k')$ time. Unfortunately, we cannot quite provide this guarantee.
For example, if the second summary contains only a single nonzero counter $c(i)$, and calling Update$(i, c(i))$ on the first summary triggers a DecrementCounters() operation, then the merging procedure will take $\Theta(k)$ time. 
However, we can guarantee the following. If $\Omega(k/k')$ summaries of size at most $k'$ are merged into a single summary of size $k$,
then the amortized time per merge is $O(k')$.

\paragraph*{Error} We now show that our merging procedure, when applied to  our \MED\ algorithm (Algorithm \ref{code:firstweighted}) satisfies an error property analogous to that established by Agarwal et al. (cf. Equation \eqref{swish}). Using similar ideas, it can be shown that our merging procedure satisfies the analogous guarantees  when applied to Algorithm \ref{code:final}, but we omit this result for brevity. While our merging procedure satisfies many desirable properties
compared to that of Agarwal et al. \cite{mergeable}, our error analysis exploits the same key ideas as theirs.
%
\begin{theorem}
Suppose that for $j=\{1, 2\}$, we run Algorithm \ref{code:firstweighted} with $k$ counters on a stream $\sigma_j$ of weighted length $N_j$. Let $N=N_1 + N_2$ denote the weighted length of the concatenated stream $\sigma := \sigma_1 \circ \sigma_2$. Let $f_i$ denote the frequency of $i$ in $\sigma$. Then after applying the merging procedure of Algorithm \ref{code:merge}, the resulting summary can, for any $i$, return an estimate $\hat{f}_i$ satisfying \vspace{-2mm} \begin{equation} \label{swishyours} 0 \leq f_i - \hat{f}_i \leq (N-C)/k^*,\end{equation} where $C$ is the sum of the counter values in the merged summary.
In fact, for any $j < k^*$, it holds that \begin{equation} \label{swish2yours} 0 \leq f_i - \hat{f}_i \leq N^{\text{res}(j)}/k^*.\end{equation}
\end{theorem}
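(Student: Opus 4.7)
My approach is to mimic the mass-accounting potential argument from the proof of Lemma \ref{key}, together with the decomposition technique of Agarwal et al., viewing Algorithm \ref{code:merge} as making $T_1$ process a virtual extension of its own stream. The crucial structural observation is that when we feed the $k'$ pairs $\{(i, c_2(i)) : i \in T_2\}$ into $T_1$ using its own Update() procedure, the resulting state of $T_1$ is precisely the state of Algorithm \ref{code:firstweighted} run on the concatenated stream $\sigma_1 \circ \tau$, where $\tau$ is a weighted stream of total weight $C_2 := \sum_i c_2(i)$. Since Lemma \ref{key} is proved by induction on stream length and is indifferent to how weights are distributed among updates, it applies verbatim to this extended run.

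Let $f_i^{(1)}, f_i^{(2)}$ denote the frequency of $i$ in $\sigma_1, \sigma_2$, let $c_2(i)$ denote the counter value of $i$ in $T_2$ (with $c_2(i)=0$ if $i \notin T_2$), and let $C$ denote the counter sum of the merged summary, with estimate $\hat{f}_i$. I would first apply Lemma \ref{key} to $T_1$ after the merge to obtain
\[
\bigl(f_i^{(1)} + c_2(i)\bigr) - \hat{f}_i \;\leq\; (N_1 + C_2 - C)/k^*.
\]
Second, I would apply Lemma \ref{key} to $T_2$ on $\sigma_2$ alone to obtain $f_i^{(2)} - c_2(i) \leq (N_2 - C_2)/k^*$, valid for every $i$ (including $i \notin T_2$, where $c_2(i) = 0$ and the bound is a direct consequence of the lemma applied to the empty estimate). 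Adding these two inequalities telescopes the $C_2$ terms and yields $f_i - \hat{f}_i \leq (N-C)/k^*$, establishing Equation \eqref{swishyours}. Non-negativity of $f_i - \hat{f}_i$ follows from a straightforward invariant: neither Update() nor DecrementCounters() can raise an estimate above the cumulative true weight of the corresponding item, which extends inductively through the merge.

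For the tail guarantee in Equation \eqref{swish2yours}, I would proceed exactly as in the latter half of the proof of Theorem \ref{thm:errorweighted}. Let $E := \max_i (f_i - \hat{f}_i)$. Equation \eqref{swishyours} gives the upper bound $C \leq N - k^* \cdot E$. For the matching lower bound, assume WLOG $f_1 \geq f_2 \geq \dots$; since the merged summary's error is at most $E$, each of the top $j$ items $i$ either has $c(i) \geq f_i - E$ or is absent from the summary with $f_i \leq E$, in either case contributing at least $f_i - E$ to the sum. Hence $C \geq \sum_{i=1}^j f_i - j E$, and combining gives $(k^*-j) E \leq N^{\text{res}(j)}$, from which Equation \eqref{swish2yours} follows.

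The main obstacle is the first step: verifying that Lemma \ref{key} can legitimately be invoked on $T_1$'s extended run even though $T_1$ starts the merge in a nontrivial state and processes updates whose weights come from a prior summary. I expect this to be a short check: rereading the inductive proof of Lemma \ref{key}, the invariant $E_n \leq (N_n - C_n)/k^*$ is preserved by each Update() call regardless of its weight and regardless of the counters' initial configuration, so the inductive step carries through with $n$ interpreted as running over the combined stream $\sigma_1 \circ \tau$. A secondary subtlety is bookkeeping for items $i \notin T_2$ in the additive decomposition $f_i - \hat{f}_i = (f_i^{(1)} + c_2(i) - \hat{f}_i) + (f_i^{(2)} - c_2(i))$, which is resolved by the convention $c_2(i) = 0$ in that case.
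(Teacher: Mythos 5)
Your proof is correct, and it reaches Equation \eqref{swishyours} by a genuinely different (though closely related) route from the paper's. The paper also reduces everything to Lemma \ref{key}, but it applies the lemma separately to the two pre-merge summaries (giving $E_j \leq (N_j - C_j)/k^*$), bounds the merged error by $E \leq E_1 + E_2 + S$ where $S$ is the total amount decremented during the merge, and then bounds $S \leq (C_1 + C_2 - C)/k^*$ by re-invoking the observation that each DecrementCounters() call removes at least $k^* \cdot c_{k^*}$ of counter mass; the three terms combine to $(N-C)/k^*$. You instead note that the merged $T_1$ is \emph{literally} the state of Algorithm \ref{code:firstweighted} run on $\sigma_1 \circ \tau$, apply Lemma \ref{key} once to that extended run and once to $T_2$, and let the $C_2$ terms telescope. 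The two arguments are equivalent in substance (your application of Lemma \ref{key} to the extended run performs the same mass accounting that the paper's bound on $S$ makes explicit), but yours is more modular: it uses Lemma \ref{key} as a true black box, avoids introducing $S$ and re-deriving the per-decrement error/mass tradeoff, and makes it transparent why the guarantee composes under arbitrary aggregation trees. Your handling of $i \notin T_2$ via the convention $c_2(i)=0$, and of top-$j$ items absent from the merged summary in the tail bound, is also correct. One small note: your derivation (like the paper's own) yields $E \leq N^{\text{res}(j)}/(k^*-j)$; the denominator $k^*$ in the statement of Equation \eqref{swish2yours} appears to be a typo, since the argument the paper invokes from Theorem \ref{thm:errorweighted} produces exactly the $(k^*-j)$ denominator you obtain.
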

\begin{proof}
Clearly $0 \leq f_i - \hat{f}_i $ for all $i$. We now turn to establishing that $f_i - \hat{f}_i \leq (N-C)/k^*$.

For $j \in \{1, 2\}$, let $\hat{f}_{i, j}$ denote the estimate for $i$ returned by summary $j$ after being run on $\sigma_j$,
and let $f_{i, j}$ denote the true frequency of item $i$ in $\sigma_j$. Let
$E_j := \max_i f_{i, j} - \hat{f}_{i, j}$, and let $c(j)$ denote the sum of the counter values in summary $j$ after being run on $\sigma_j$. Let $E = \max_i f_{i} - \hat{f}_{i}$ denote the maximum error in any estimate returned by 
the merged summary. Finally, let $S$ denote the sum of all values $c_{k^*}$ that
are computed during DecrementCounters() operations called during the merge procedure. 

It is clear that $E \leq E_1 + E_2 + S$. By Lemma \ref{key}, 
\begin{equation} \label{soclose} E \leq (N_1 - C_1)/k^* + (N_2 - C_2)/k^* + S.\end{equation} 
Recall from the proof of Lemma \ref{key} that any DecrementCounters() operation 
decreases at least $k^*$ counters by $c_{k^*}$. Hence, $S \leq (C_1 + C_2 - C)/k^*$.
Combining this with Equation \eqref{soclose}, we conclude that 
$E \leq (N_1 - C_1)/k^* + (N_2 - C_2)/k^* +(C_1 + C_2 - C)/k^* = (N_1 + N_2 - C)/k^* = (N-C)/k^*$,
establishing Equation \eqref{swishyours}.
To see that Equation \eqref{swish2yours} follows from Equation \eqref{swishyours}, use the same argument from the proof of
Theorem \ref{thm:errorweighted} (as the proof of Theorem \ref{thm:errorweighted} showed that the desired tail guarantee follows in a black box manner from Lemma \ref{key}).
\end{proof}
We remark that a final advantage of our merging procedure is that it does not require significant additional
code on top of the streaming algorithm itself: it simply invokes the streaming algorithm's
Update() procedure in a black-box manner.

\medskip
\noindent \textbf{Note.} A subtlety regarding implementations of our proposed merging procedure is as follows. 
If counters are stored in hash tables, then the performance 
guarantees of the 
hash table assume that hash values are (roughly) uniformly distributed throughout the table.
If our 
merge procedure is used 
to merge two summaries whose hash tables use \emph{the same} hash function, 
then care must be taken to ensure that this property continues to hold.
For example, if one iterates through
the second summary's hash table from front to back, calling Update($i, c(i)$) on the first summary
for each counter $(i, c(i))$ encountered, then one may ``overpopulate'' the front of the first summary's hash table.

One avoids this issue entirely if  the summaries choose their hash functions 
independently of each other. Even for summaries that do use the same hash function, 
this issue can be addressed by, say, iterating through the second summary's counters
in a random order. 

\label{sec:merging}
\section{Experiments}
\label{sec:experiments}
\label{sec:expts}
The goal of this section is to investigate three issues. 
First, we compare our main algorithm \SMED\ to alternatives: the proposals \RBMC\ and \MHE\ of prior work, and a variant of $\SMED$ that we call \SMIN. The difference between \SMIN\ and \SMED\ is that the DecrementCounters() operation of \SMIN\ decrements all counters by the sample minimum counter value, instead of the sample median value.
 Notice that \SMIN\ is identical to \RBMC, except that the DecrementCounters() operation of \RBMC\ decrements by the global minimum counter value,
while \SMIN\ decrements by the minimum counter value in a sample of size $\ell=1024$.

We find that \SMED\ is significantly faster than the the algorithms of prior work (5.5x-8.7x faster than \MHE\ and up to 70x faster than \RBMC). While it is not quite as accurate as \MHE\ or \RBMC\ for a given amount of memory, the maximum error
of \SMED\ is at most 30\% larger than that of \MHE, and less than 2.5x that of \RBMC\ and \SMIN\ (the latter two algorithms had nearly identical
error on the datasets used in our experiments).
Moreover, the increased error of \SMED\ relative to alternatives can be overcome, while simultaneously decreasing runtime, by increasing the number of counters by a small factor (less than 2x). This is entirely expected, as our error analysis (cf. Theorems \ref{thm:errorweighted} and \ref{thm:errorfinal}) indicate that for sufficiently large settings of the parameter $\ell$, \SMED\  with $k$ counters has error that is no worse than \RBMC\ with $\approx k/2$ counters. 
Hence, we find that our novel \SMED\ algorithm is the preferable option, except possibly in settings where space and accuracy are paramount. 

Even in settings in which space and accuracy are paramount, 
we find that on the datasets with which we experimented, the preferable solution is our novel algorithm \SMIN. 
For a given space bound, \SMIN\ achieves error nearly identical to
\RBMC\ (and much better than \MHE), with far superior speed. 

Second, our main algorithm \SMED\ (cf. Algorithm \ref{code:final}) naturally allows for a smooth tradeoff between error and speed
as follows. While \SMED\ itself implements the DecrementCounters() procedure by sampling many counters and
decrementing by the sample median, one could instead decrease by a different sample quantile (at the extreme end is \SMIN, which decrements by the sample minimum). A larger decrement amount increases the error, but ensures that DecrementCounters() operations occur less frequently, and hence speeds up the algorithm. 
We investigate this tradeoff on real and synthetic datasets, and find that decrementing by sample medians 
is indeed an attractive point on this tradeoff curve.

Third, we compare our merging procedure to prior work \cite{mergeable}, and find that ours is significantly faster (up to 12x faster than a straightforward implementation of prior work \cite{mergeable}) while using less than half the space.

\subsection{Details}
All experiments were run on a machine running Windows 10 with a Intel Core i5-6600K 3.5GHz Quad-Core Processor, (2 x 8GB) DDR4-2666 Memory, and a 500GB SSD. Test data was generated from the CAIDA Anonymized Internet Traces 2016 Dataset.  
Four packet capture files were chosen at random from the dataset. These were then preprocessed into updates of the form ($a_i, \Delta_i$), where $a_i$ is the source IP with decimal points excluded, and $\Delta_i$ is the packet size in bits. These four preprocessed streams were then concatenated into one larger stream.  The stream length was $n \approx 126.2$ million and the weighted stream length was $N := \sum_{i=1}^{n} \Delta_i \approx 72.2 \times 10^{9}$. Since the data came from an IPv4 packet capture, the universe size is $m=2^{32}$. Although IPv4 addresses can be stored with 32 bits, our implementations, for purposes of generality, use a long long data type (64 bits) to store each. The implementations can be trivially modified to only use 32 bits per identifier if desired. 

The number of unique identifiers in the packet stream was approximately 1.75 million. This may seem small, as the trivial (exact) algorithm that keeps a hash table storing an exact count for each unique IP address in the stream will use only several dozen megabytes of space. Nonetheless, this is more than enough data to obtain a clear qualitative comparison of the tested algorithms. Moreover, our new algorithms, when run with, say, $k=24,576$ counters, use less than 1/70th of the space of the trivial solution.

We also ran experiments on synthetic data generated by a Zipfian distribution with various skewness parameters.
The algorithms performed entirely similarly on these datasets and the packet trace data. Hence, for brevity we
only present our experimental results on the packet trace data.

\subsection{Properties Exhibited By All Algorithms}
\label{sec:commonalities}
For any fixed stream, as the number of counters $k$ used in a counter-based algorithm increases, the algorithm grows closer and closer to
being exact, and the number of DecrementCount() operations relative to the number of stream updates decreases. 
Hence, the algorithms' performances (in both speed and error) converge as the number of counters grows. Consequently, the biggest differences between 
the counter-based algorithms explored in our experiments are observed when the number of counters is relatively small.

\subsection{Comparison to Baselines}

\begin{figure}
\begin{center} 
\includegraphics[width=3.2in]{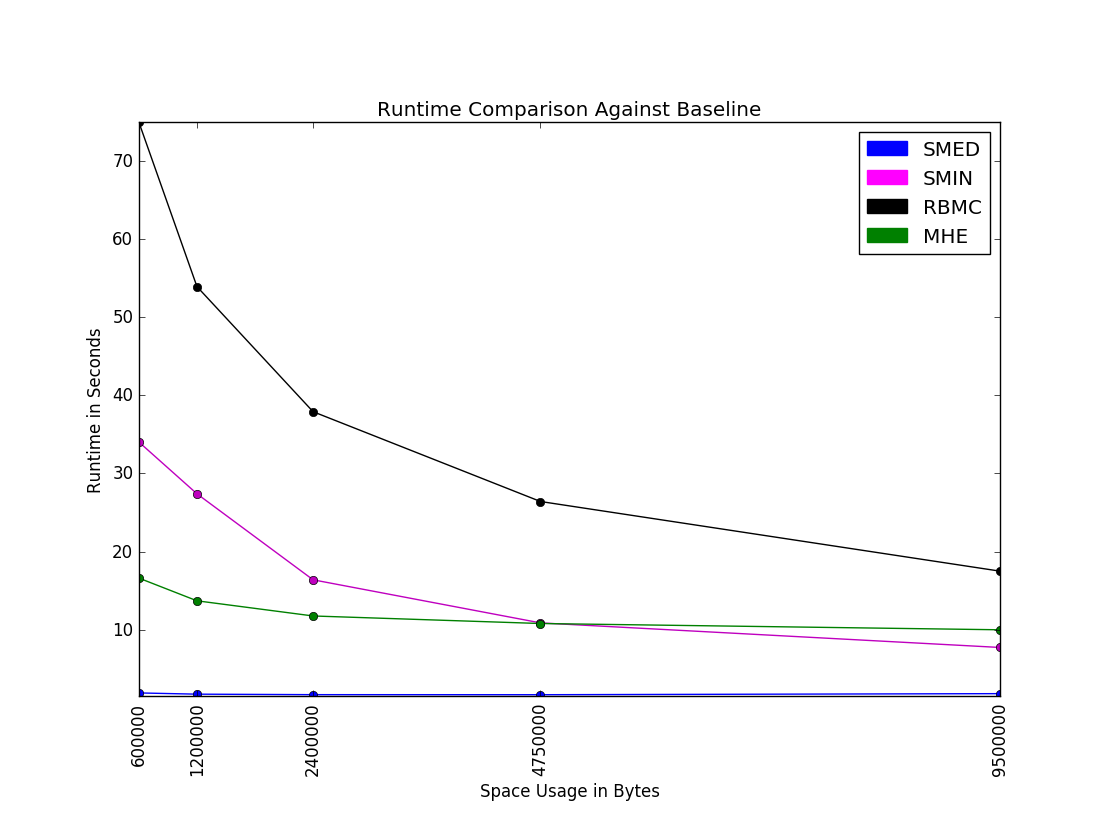}\\
\includegraphics[width=3.2in]{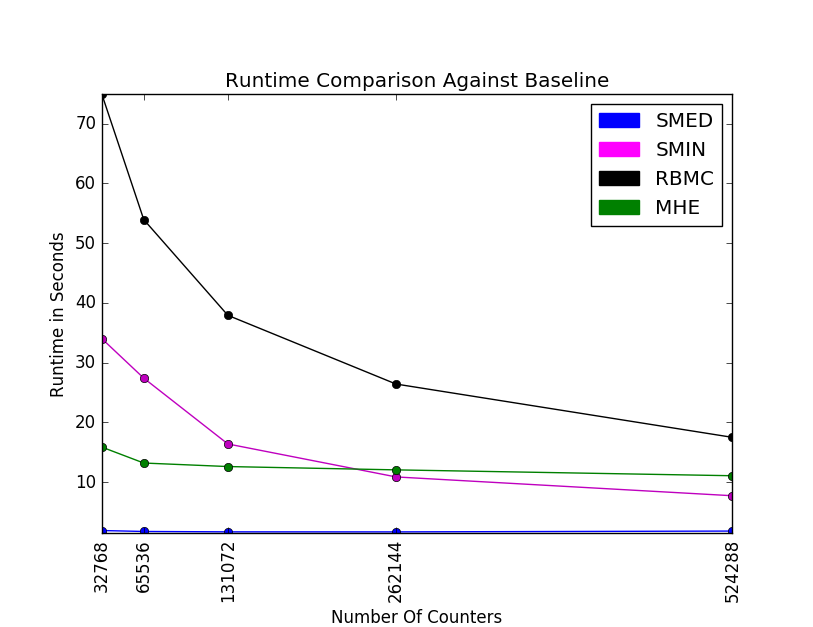}
\end{center} 
\caption{Runtime Comparison of Four Algorithms}
\label{fig:baselinetime}
\end{figure}
\label{sec:baselines}

\begin{figure}
\begin{center} 
\includegraphics[width=3.2in]{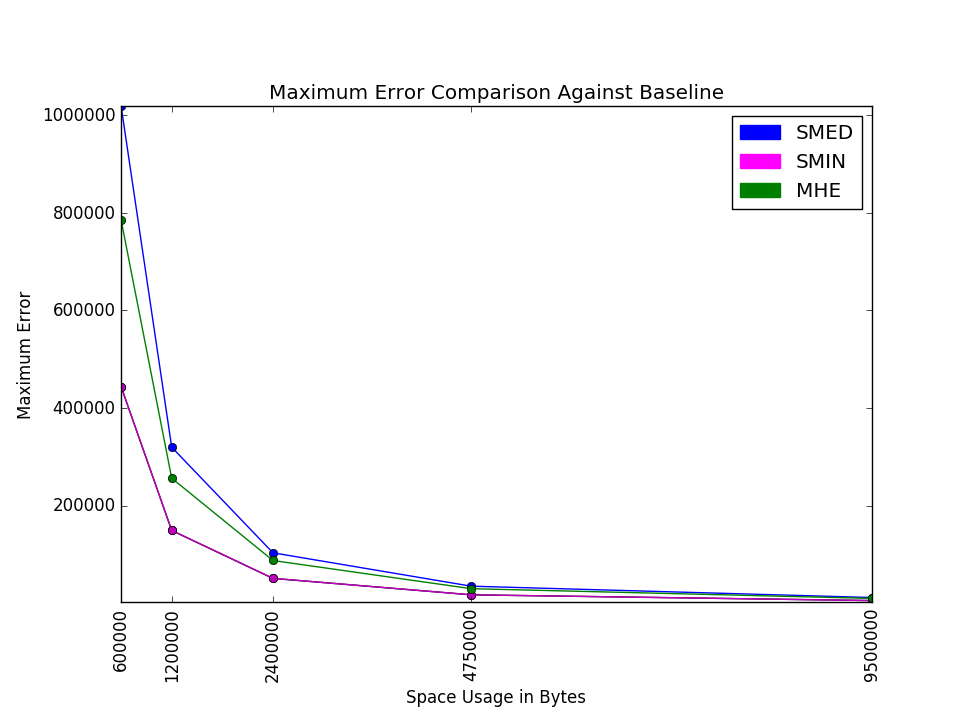}
\mbox{                          }
\includegraphics[width=3.2in]{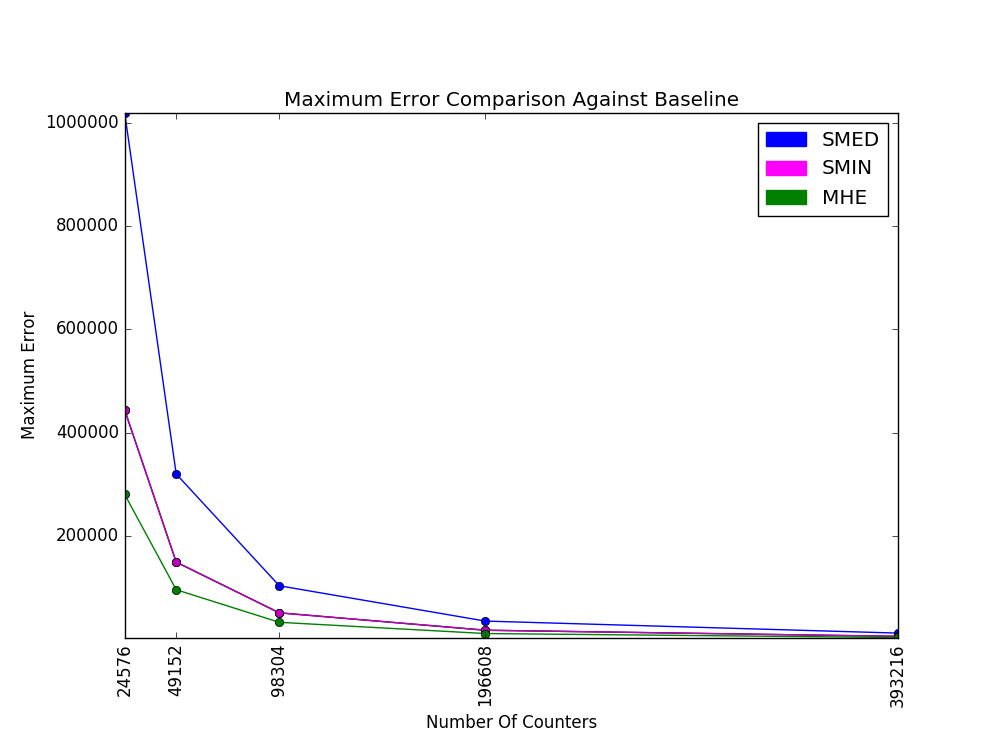}
\end{center} 
\caption{Maximum Error Comparison of Four Algorithms. \RBMC\ is omitted from both plots, as its error
is indistinguishable from \SMIN. For an equal number of counters $k$ (bottom plot), \RBMC, \MHE, and \SMIN\
all have indistinguishable maximum error, and hence only one is displayed. Indeed, \RBMC\ with $k$ counters 
is isomorphic to \MHE\ with $k+1$ counters,
cf. Section \ref{sec:isomorphic}.}
\label{fig:baselineerror}
\end{figure}
We compared our new algorithms, \SMED\ and \SMIN, to the
algorithms for weighted streams given in prior work, \RBMC\ and \MHE. 
While \RBMC, \SMED, and \SMIN\ all use the same amount of space (in bytes)
for a given number of counters $k$, \MHE\ uses additional space owing to the need
to maintain a min-heap data structure in addition to a hash table storing all counters.
 Figures  \ref{fig:baselinetime} and \ref{fig:baselineerror} present both 
equal-space and equal-counters comparisons of the algorithms.

For each of our five tested values of $k$, we ran each algorithm 10 times on the same packet stream, and
all numbers reported are averages of all 10 runs (the error and space usage for the \MHE\ and \RBMC\ algorithms are deterministic functions of the stream, so only the timings varied from run to run for these algorithms). 

Figures  \ref{fig:baselinetime} and \ref{fig:baselineerror} show runtime and maximum error comparisons
for the four algorithms. Only the runtime graphs contain the results for \RBMC, because  
the error of \RBMC\ and \SMIN\ were so close as to be indistinguishable. 

\medskip
\noindent \textbf{Comparison of \SMED\ to Alternatives.}
 Figure \ref{fig:baselinetime} demonstrates that \SMED\ significantly outperforms all three alternatives in terms of runtime. For an equal amount of space, \SMED\ was faster than \MHE\ by a factor of 5.5x-8.7x. Compared to \SMIN, \SMED\ was 6.5x-30x faster, and compared to \RBMC, \SMED\ was 20x-70x times faster. As indicated in Section \ref{sec:commonalities}, the smaller the number of counters $k$, the faster \SMED\ is relative to the alternatives.
 
Figure \ref{fig:baselineerror} demonstrates that for, an equal amount of space, the maximum error of \SMED\ is only 18\%-29\% larger than that of \MHE. \RBMC\ and \SMIN\ both have significantly less error than \SMED\ and \MHE\ for a given amount of space. However, the maximum error of \SMED\ is never more than 2.5x that of \RBMC\ and \SMIN. Moreover, the figure
demonstrates that
this increased error of \SMED\ relative to alternatives can be overcome (while simultaneously decreasing runtime) by increasing the number of counters (and hence space consumption) by a factor of less than 2x.

In conclusion, \SMED\ appears to be the most attractive option, except in settings where space and error are
significantly more important than speed. In these settings, our new \SMIN\ algorithm is more attractive
than the alternatives from prior work, namely \MHE\ and \RBMC\ (we note that Section \ref{sec:tradeoffs} below demonstrates that it is possible
to smoothly interpolate between \SMED\ and \SMIN, obtaining different tradeoffs between speed and error. Exactly
which tradeoff is preferable depends on the relative importance of speed vs. error in the application domain). 
Figures \ref{fig:baselinetime} and \ref{fig:baselineerror} demonstrate that for an equal amount of space,
\SMIN\ is about 2x faster than \RBMC, and their error behaviors are indistinguishable. For an equal amount of space,
\SMIN\ is never more than 2x slower then \MHE\ and is actually faster than \MHE\ for large $k$; meanwhile, the error
of \MHE\ is 1.6x-1.8x that of \SMIN.

\subsection{Tradeoffs Between Speed and Error}
\label{sec:tradeoffs}
Recall that \SMED\ takes a sample of the keys currently being stored in the hash table, uses Quickselect \cite{quickselect} to compute the median of the sample, then decrements all counters by this quantity. We ran tests on a large number of variations of this novel algorithm, where a variation might decrement by other sample quantiles rather than the sample median. Overall, we tested fifty total variations, ranging from the 0th quantile to the 98th quantile. Note that decrementing by the 0th quantile yields the algorithm \SMIN.

The leftmost plot in Figure \ref{fig:quantilesruntime} shows the runtime in seconds for each setting of $k$, and how it varies with the quantile used to determine decrements. The runtimes for \SMIN\ and a few other low quantiles are not displayed so as not to distort the graph, as those runtimes are significantly larger than the rest. This is expected, as 
the less aggressively DecrementCounters() decreases counter values, 
the greater the number of DecrementCounters() operations that occur.
From the 0th quantile (\SMIN) to the 50th quantile (\SMED), runtime drops heavily (specifically, as indicated previously in Section \ref{sec:baselines}, \SMED\ is 6.5x-30x faster than \SMIN, depending on number of counters $k$.) 
Trivially the best runtimes are achieved by using very high quantiles in the DecrementCounters() operation, but  the ``returns'' are diminishing: using the 98th quantile yields an algorithm that is only 20-30\% faster than using the the 20th quantile.

The middle and bottom plot in Figure \ref{fig:quantilesruntime} displays the relationship between quantile and maximum error. As expected, the trend here is that as quantile increases, error increases. However, the error increases relatively slowly as the quantile increases from 0 to about 70, before shooting up rapidly thereafter. 
We conclude that, on the datasets tested, any quantile less than roughly 70 represents a reasonable choice in the tradeoff
curve between speed and error; exactly which point on the curve is most desirable will
depend on the relative importance of speed vs. error in an application.

We believe, however, that it is wise to always use a quantile of roughly 10 or higher
(i.e., we believe this is preferable to using \SMIN). 
Such a choice of quantile achieves almost all of the accuracy advantages of \SMIN, while coming
with significantly improved worst-case runtime bounds compared to $\SMIN$.
In particular, this provides runtime guarantees 
 even for data distributions that 
differ substantially from any that we experimented on.

\begin{figure}
\begin{center} 
\includegraphics[width=3.2in]{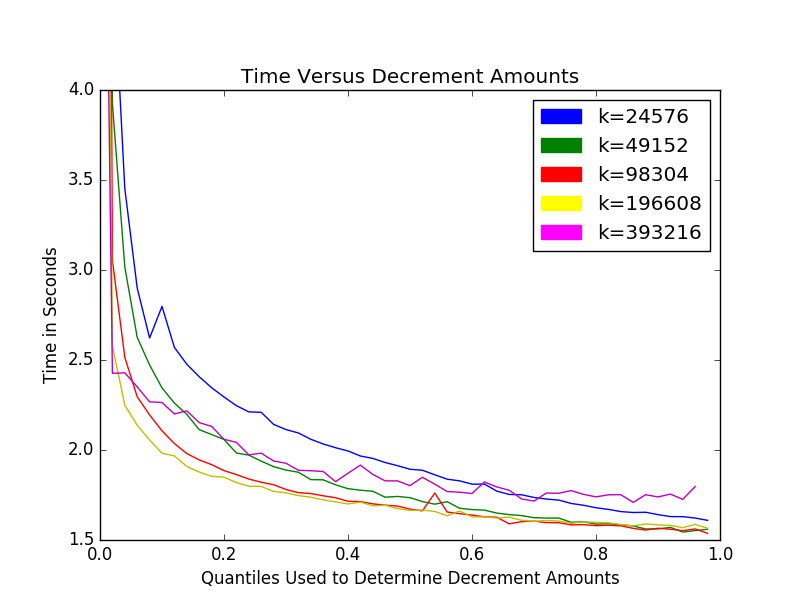}
\includegraphics[width=3.2in]{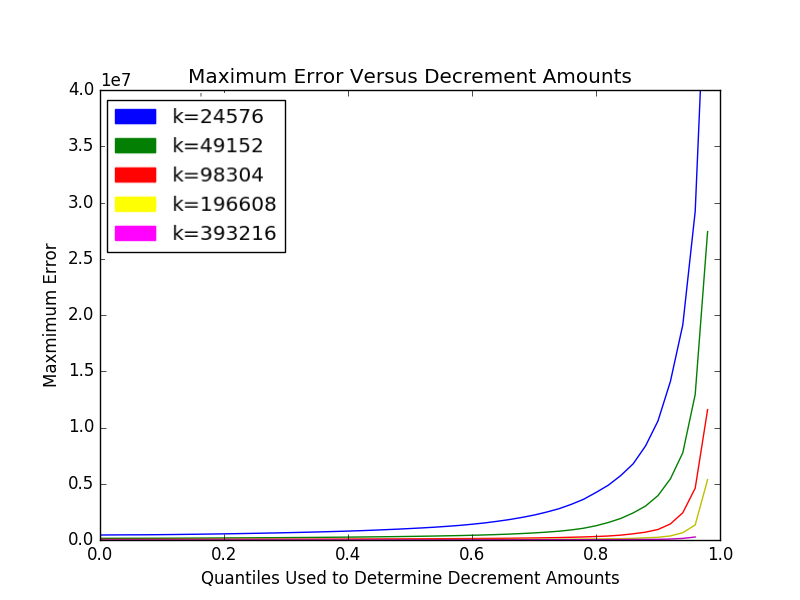}
\includegraphics[width=3.2in]{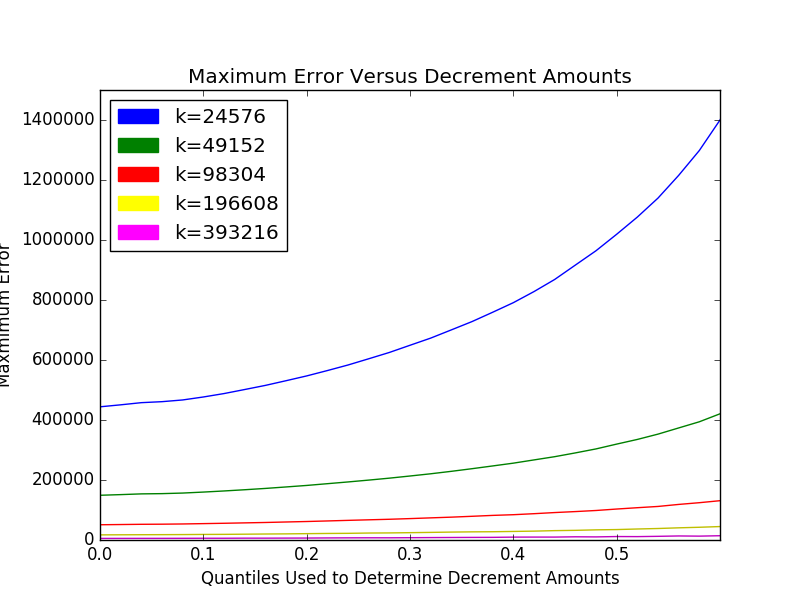}
\end{center} 
\caption{Time and error of our algorithms as a function of the quantile Used by DecrementCounters(). The middle and bottom plots are identical, except that the bottom plot cuts off large quantiles in order to better illustrate 
the error for small quantiles.}
\label{fig:quantilesruntime}
\end{figure}

\begin{figure}
\begin{center} 
\includegraphics[width=3.2in]{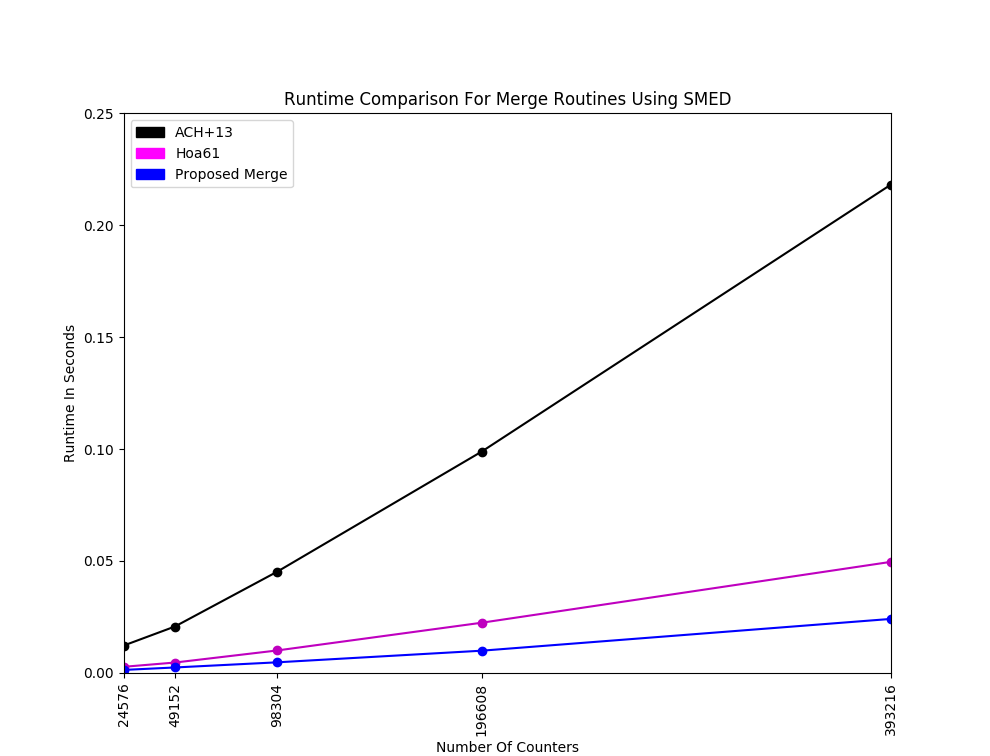}
\end{center} 
\caption{Speed of Our Merge Procedure Compared to Prior Work.
ACH+13 denotes a direct implementation of the merging procedure
of Agarwal et al. \cite{mergeable}. Hoa61 denotes our proposed
Quickselect-based implementation of the procedure of \cite{mergeable} (cf. Section \ref{sec:variant}).}
\label{fig:merge}
\end{figure}

\subsection{Merging}
In addition to the prior tests, we also ran experiments to evaluate our merging procedure (cf. Section \ref{sec:merge}). 
We compared our proposed procedure against the proposal of Agarwal et al. \cite{mergeable}, 
as well as the variant that we proposed (cf. Section \ref{sec:variant}) that uses Quickselect \cite{quickselect} rather than
sorting to identify the $k$th largest counter after ``adding'' the counters from the two sketches being merged.
Recall that our merging procedure uses the Update() operation of a counter-based
algorithm as a black box; we tested our merging procedure using the Update() procedure of \SMED. 

In each of our experiments, we merged 50 pairs of sketches, with each sketch having a capacity of $k$ counters. 
These experiments require 50x the amount of raw data that is required to test a single sketch. Hence, we ``filled up'' the sketches using synthetic data streams before merging them: the streams
had item identifiers drawn from a Zipfian distribution with parameter $\alpha=1.05$, and item wieghts were generated from a uniformly random distribution from 1 to 10,000 (cf. \cite[Section 5]{berinde} for a definition of this distribution).

As illustrated in Figure \ref{fig:merge}, when using \SMED, our proposed merging procedure runs up to 8.6x-10x faster than the procedure of \cite{mergeable} (the bigger the sketch, the faster our proposed procedure is relative to that of \cite{mergeable}). 
Our
proposed merging procedure is 1.9x-2.26 faster than the Quickselect-based variant. 
Moreover, the error of our merge procedure when using \SMED\ was always within 2.5\% of the error of the 
procedure of \cite{mergeable} and the equivalent Quickselect-based variant; owing to
the small magnitude of this difference, we do not display this error graphically.
 
It is worth mentioning that our proposed merge algorithm also uses less space than the two alternatives.  Both alternatives require allocating an additional hash table of capacity $2k$ in order to store all counters from both sketches, as well as an extra hash table of capacity $k$ to store the final merged sketch. Hence, they consume 2.5x more space than our procedure, which uses no additional space beyond that already used by the two sketches to be merged.

\section{Other Prior Work}
\paragraph{Work of Bhattacharyya et al.}
Recent work of Bhattacharyya et al. \cite{woodruffpods} gives improved space bounds for
identifying $(\phi, \eps)$-heavy hitters in the $\ell_1$ norm for insertion streams. 
They describe their algorithms in the context of unweighted streams.
Roughly speaking, for any constant failure probability $\delta>0$, their algorithms manage to replace the $\log n$ factor in the $O(\eps^{-1} \cdot \log n)$ space bound (in bits)
for the basic \MG\ and \SS\ algorithms with a factor of $\log(\phi^{-1})$, which is asymptotically optimal.\footnote{The precise space bound they achieve for constant failure probability $\delta > 0$ is, in our notation, $O(\eps^{-1} \cdot \log(\phi^{-1}) + \phi^{-1} \cdot \log m + \log\log n)$ bits.}

They give two algorithms for identifying $(\phi, \eps)$-heavy hitters: a relatively simple algorithm
that is nearly space-optimal (it is sub-optimal only because it replaces the $\log n$ factor in the space bound of \MG\ and \SS\
with a factor of $\log(\eps^{-1})$ instead of $\log(\phi^{-1})$), 
and a 
significantly more
complicated algorithm that achieves asymptotically optimal space usage.

Their simple algorithm works by sampling $O(\eps^{-2}\log(1/\delta))$ stream updates at random,
and feeding the sampled updates into a (slightly modified) instance of the \MG\ algorithm with $k=O(1/\eps)$ counters.
The samples are computed implicitly in $O(1)$ time per stream update by
``skipping'' each stream update with probability $1-p$, for a parameter $p=O(\eps^{-2}\log(1/\delta)/n)$. 
To save space, the counters
in the \MG\ 
algorithm do not store full item identifiers. Instead, they each store a hash (consisting of $O(\log 1/\eps)$ bits) 
of the assigned identifier; a separate table is used to track the actual identifiers
corresponding to the $\phi^{-1}$ largest counters in the \MG\ summary. For such identifiers, estimated frequencies
in the sample (scaled appropriately) are used as estimates for their true frequencies.

Some thought is required in order to extend this idea to the setting of weighted streams while maintaining
constant-time updates. Trivially, one can treat an update $(i, \Delta)$ as $\Delta$ updates of the form $(i, 1)$,
but this does not maintain the time bound (even in an amortized sense) if $\Delta = \omega(1)$ on average. 
We now sketch an adaptation that has the same accuracy guarantees
as the simple algorithm of Bhattacharyya et al. \cite{woodruffpods},
and runs in constant amortized time per stream update with high probability,
assuming that the number of stream updates $n$ is $\Omega(\eps^{-2}\log(1/\delta))$.

Assume the weighted stream length $N=\sum_i f_i$
is known in advance (this assumption can be removed using observations of \cite[Section 3.5]{woodruffpods}).
Fix a suitable $p=O(\eps^{-2}\log(1/\delta)/N)$.
 When processing update $(i, \Delta)$, the streaming algorithm repeatedly
 samples geometric random variables with parameter $p$
 (i.e., the variables are distributed according to the number of Bernoulli trials with success probability $p$ 
 that are needed to get one successful trial).
 It stops when the sum of the variables is more than $\Delta$. 
 If $t$ samples are required before the sum of the variables is more than $\Delta$,
 then the algorithm feeds 
update $(i, t)$ into an instance of any counter-based algorithm
capable handling weighted updates. The remainder of the details, and the error
analysis, are similar to \cite{woodruffpods},
and we omit them for brevity.

The above adaptation allows our efficiency improvements relative to prior work for weighted
streams and mergeability to carry over in a black-box
manner to the simple algorithm of Bhattacharyya et al. \cite{woodruffpods}. 
One simply feeds the (weighted) sampled stream updates into our new optimized algorithms for weighted updates.

\paragraph{Work of Bhattacharyya et al.}
Recent independent work of Sivaraman et al. \cite{muthunew} proposes modifying the \SS\ algorithm (cf. Section \ref{sec:ssunit}) as follows. When processing a stream update $(i, \Delta)$, if $i$ is not assigned a counter and all counters are in use, sample $\ell$ counters at random (where $\ell$ is a design parameter), and then reassign this counter to $i$ and increment it by $\Delta$. The primary motivation for this proposal is to minimize memory accesses on each stream update, as memory accesses are a bottleneck in network switching hardware. For $\ell=O(1)$, this proposal runs in constant time per stream update, but may have larger error than our proposals. We leave a detailed experimental comparison of our algorithms with the proposal of Sivaraman et al. \cite{muthunew} to future work.

\label{sec:additional}
\section{Conclusion}
We described a highly optimized version of Misra and Gries' seminal algorithm for estimating frequencies of items over data streams. 
Our algorithm improves on two theoretical and practical aspects of prior work.
First, it handles {\it weighted} updates in amortized constant time.
Second, it uses a simple and fast method for merging summaries that asymptotically improves on prior work even for unweighted streams. 

One direction for future work is to incorporate our optimized algorithms
into more complicated streaming algorithms that use heavy hitter algorithms as subroutines. 
These include estimating the empirical entropy of a data stream \cite{entropy},
and identifying hierarchical heavy hitters \cite{hhh}. 

\vspace{2mm}
\medskip
\noindent \textbf{Acknowledgements.} We are grateful to Graham Cormode for helpful comments on an earlier version of this manuscript.

\bibliographystyle{abbrv}
\bibliography{bibs}  

\end{document}